\documentclass[envcountsame,runningheads,a4paper]{llncs}
\usepackage[colorlinks=true, citecolor=blue,linkcolor=red,urlcolor=black]{hyperref}
\usepackage[utf8]{inputenc}
\usepackage{hhline}
\usepackage{fullpage}

 \usepackage{multirow}
 
\usepackage{times}


\usepackage{xcolor,tikz,pgf,pgffor}
\usetikzlibrary{automata,shapes,backgrounds,decorations.pathreplacing}

\usepackage{amsmath}
\usepackage{amssymb,stmaryrd}
\usepackage{enumitem}
\usepackage{multirow}
\usepackage{wrapfig}
\usetikzlibrary{decorations.pathreplacing}
\usetikzlibrary{matrix}

\usepackage[mathlines]{lineno}
\setlength\linenumbersep{0.5cm}

\let\oldqed\qed
\renewcommand\qed{\mbox{}\hfill$\oldqed$}
\newcommand*\patchAmsMathEnvironmentForLineno[1]{%
  \expandafter\let\csname old#1\expandafter\endcsname\csname #1\endcsname
  \expandafter\let\csname oldend#1\expandafter\endcsname\csname end#1\endcsname
  \renewenvironment{#1}%
  {\linenomath\csname old#1\endcsname}%
  {\csname oldend#1\endcsname\endlinenomath}}%
\newcommand*\patchBothAmsMathEnvironmentsForLineno[1]{%
  \patchAmsMathEnvironmentForLineno{#1}%
  \patchAmsMathEnvironmentForLineno{#1*}}%
\AtBeginDocument{%
  \patchBothAmsMathEnvironmentsForLineno{equation}%
  \patchBothAmsMathEnvironmentsForLineno{align}%
  \patchBothAmsMathEnvironmentsForLineno{flalign}%
  \patchBothAmsMathEnvironmentsForLineno{alignat}%
  \patchBothAmsMathEnvironmentsForLineno{gather}%
  \patchBothAmsMathEnvironmentsForLineno{multline}%
}

\makeatletter
\let\c@definition\c@theorem
\let\c@lemma\c@theorem
\let\c@corollary\c@theorem
\let\c@remark\c@theorem
\let\c@example\c@theorem
\let\c@proposition\c@theorem
\let\c@problem\c@theorem
\makeatother

\SetSymbolFont{stmry}{bold}{U}{stmry}{m}{n}

\usepackage{algorithm}
\usepackage[noend]{algpseudocode}

\algnewcommand\algorithmicforeach{\textbf{for each}}
\algdef{S}[FOR]{ForEach}[1]{\algorithmicforeach\ #1\ \algorithmicdo}



\newcommand{\N}{\mathbb{N}}

\newcommand{\ssetminus}{\! \setminus \!}

\def\rest#1#2{#1_{\restriction#2}}
\def\subgame#1#2{#1 \setminus #2}


\newcommand{\Plays}{\mathsf{Plays}}

\newcommand{\Obj}{\Omega}  
\newcommand{\Win}[3]{{\mathsf{Win}}({#2},{#1},{#3})}  

\newcommand{\Occ}{\mathsf{Occ}}
\newcommand{\Occinf}{\mathsf{Inf}}

\newcommand{\Reach}{\mathsf{Reach}}
\newcommand{\Safe}{\mathsf{Safe}}
\newcommand{\Buchi}{\mathsf{B\ddot{u}chi}}
\newcommand{\CoBuchi}{\mathsf{CoB\ddot{u}chi}}

\newcommand{\GenBuchi}{\mathsf{GenB\ddot{u}chi}}
\newcommand{\Par}{\mathsf{EvenParity}}
\newcommand{\OddPar}{\mathsf{OddParity}}
\newcommand{\iPar}[1]{{#1}\mathsf{Parity}}
\newcommand{\ConjPar}{\mathsf{ConjEvenParity}}
\newcommand{\DisjPar}{\mathsf{DisjOddParity}}

\newcommand{\posAttr}{positive attractor}
\newcommand{\layeredAttr}{layered attractor}
\newcommand{\posSafeAttr}{positive safe attractor}
\newcommand{\CPre}[3]{{\mathsf{Cpre}}_{#1}({#2},{#3})}
\newcommand{\Attr}[3]{{\mathsf{Attr}}_{#1}({#2},{#3})}
\newcommand{\PosAttr}[3]{{\mathsf{PAttr}}_{#1}({#2},{#3})}
\newcommand{\PosSafeAttr}[4]{{\mathsf{PSafeAttr}}_{#1}({#2},{#3},{#4})}

\newcommand{\LAttr}[5]{{\mathsf{LayAttr}}_{#1}({#2},{#3},{#4},{#5})}

\newcommand{\JF}[3]{{\mathsf{GoodEp}}_{#1}({#2},{#3})}
\newcommand{\QJF}[4]{{\mathsf{LayEp}}_{#1}({#2},{#3},{#4})}

\newcommand{\Pset}{P_{\geq q}}
\newcommand{\p}{\alpha} 
\newcommand{\pmin}{q}
\newcommand{\pmax}[1]{d({#1})}

\newcommand{\pmini}[1]{q_{{#1}}}
\newcommand{\pmaxi}[1]{d_{{#1}}(0)}

\newcommand{\Zset}{Z}

\newcommand{\List}{L}
\newcommand{\GenList}{{\cal P}}
\newcommand{\psolB}{BüchiSolver}
\newcommand{\GenpsolB}{GenBüchiSolver}
\newcommand{\psolC}{GoodEpSolver}
\newcommand{\GenpsolC}{GenGoodEpSolver}
\newcommand{\psolQ}{LaySolver}
\newcommand{\GenpsolQ}{GenLaySolver}
\newcommand{\Ziel}{Zielonka}
\newcommand{\GenZiel}{GenZielonka}
\newcommand{\PartialZiel}{Ziel\&PSolver}
\newcommand{\GenPartialZiel}{GenZiel\&PSolver}

\newcommand{\up}{up}
\newcommand{\down}{down}
\newcommand{\init}{\p}
\newcommand{\edge}[1]{E_{{#1}}}

\title{Partial Solvers for Generalized Parity Games\thanks{Work partially supported by the PDR project \emph{Subgame perfection in graph games} (F.R.S.-FNRS), the ARC project \emph{Non-Zero Sum Game Graphs: Applications to Reactive Synthesis and Beyond} (Fédération Wallonie-Bruxelles), the EOS project \emph{Verifying Learning Artificial Intelligence Systems} (F.R.S.-FNRS \& FWO), and the COST Action 16228 \emph{GAMENET} (European Cooperation in Science and Technology)}}

\author{V\'eronique Bruy\`ere\inst{1} \and Guillermo A. P\'erez\inst{2} \and Jean-Fran\c cois Raskin\inst{3} \and Cl\'ement Tamines\inst{1}}
\authorrunning{V. Bruy\`ere, G. A. P\'erez, J.-F. Raskin, C. Tamines}
\institute{University of Mons (UMONS), Mons, Belgium \\
\email{\{veronique.bruyere,clement.tamines\}@umons.ac.be} \and
University of Antwerp (UAntwerp), Antwerp, Belgium\\
\email{guillermoalberto.perez@uantwerpen.be} \and
Universit\'e libre de Bruxelles (ULB), Brussels, Belgium \\
\email{jraskin@ulb.ac.be}
}

\begin{document}

\maketitle

\begin{abstract}
Parity games have been broadly studied in recent years for their applications to controller synthesis and verification. In practice, partial solvers for parity games that execute in polynomial time, while incomplete, can solve most games in publicly available benchmark suites. In this paper, we combine those partial solvers with the classical recursive algorithm for parity games due to Zielonka.
We also extend partial solvers to generalized parity games that are games with conjunction of parity objectives.  
We have implemented those algorithms and evaluated them on a large set of benchmarks proposed in the last LTL synthesis competition. 
\end{abstract}

\keywords{Parity games, Generalized parity games, Partial solvers.}

\section{Introduction}

Since the early nineties, parity games have been attracting a large attention in the formal methods and theoretical computer science communities for two main reasons. First, parity games are used as intermediary steps in the solution of several relevant problems like, among others, the reactive synthesis problem from LTL specifications~\cite{PR89} or the emptiness problem for tree automata~\cite{EmersonJ91}. Second, their exact complexity is a long standing open problem: while we know that they are in ${\sf NP} \cap {\sf coNP}$~\cite{EmersonJ91} (and even in ${\sf UP} \cap {\sf coUP}$~\cite{Jurdzinski98}), we do not yet have a polynomial time algorithm to solve them. Indeed, the best known algorithm so far has a worst-case complexity which is quasi-polynomial~\cite{Calude}.

The classical algorithm for reactive synthesis from LTL specifications is as follows: from an LTL formula $\phi$ whose propositional variables are partitioned into inputs (controllable by the environment) and outputs (controlled by the system), construct a deterministic parity automaton (DPA) $A_\phi$ that recognizes the set of traces that are models of $\phi$. This DPA can then be seen as a two player graph game where the two players choose in turn the values of the input variables (Player 1) and of the output variables (Player 0). The winning condition in this game is the parity acceptance condition of the DPA. The two main difficulties with this approach are that the DPA may be doubly exponential in the size of  $\phi$ and that its parity condition may require exponentially many priorities. So the underlying parity game may be hard to solve with the existing algorithms (which are not polynomial time). These difficulties have triggered two series of results. 

First, incomplete algorithms that partially solve parity games in polynomial time have been investigated in~\cite{HuthKP13,HuthKP16,Ah-FatH16}. Although they are incomplete, experimental results show that they behave well on benchmarks generated with a random model and on examples that are forcing the worst-case behavior of the classical recursive algorithm for solving parity games due to Zielonka~\cite{zielonka98}. The latter algorithm has a worst-case complexity which is exponential in the number of priorities of the parity condition.
Second, compositional approaches to generate the automata from LTL specifications have been advocated, when the LTL formula $\phi$ is a conjunction of smaller formulas, i.e., $\phi=\phi_1 \land \phi_2 \cdots \land \phi_n$. In this case, the procedure constructs a DPA $A_i$ for each subformula $\phi_i$. The underlying game is then the product of the automata $A_i$ and the winning condition is the conjunction (for Player 0) of the parity conditions of each automaton. Those games are thus generalized parity games, that are known to be $\mathsf{co}$-$\mathsf{NP}$-complete~\cite{ChatterjeeHP07}.

In this paper, we contribute to these two lines of research in several ways. First, we show how to extend the partial solvers for parity games to generalized parity games. In the generalized case, we also show how efficient data structures based on antichains can be used to retain efficiency. Second, we show how to combine partial solvers for parity games and generalized parity games with the classical recursive algorithms~\cite{zielonka98,ChatterjeeHP07}. In this combination, the recursive algorithm is only executed on the portion of the game graph that was not solved by the partial solver, and this is repeated at each recursive call. 
Third, we provide for the first time extensive experiments that compare all those algorithms on benchmarks that are generated from LTL specifications used in the LTL synthesis competition~\cite{syntcomp18}. For parity games, our experiments show behaviors that differ largely from the behaviors observed on experiments done on random graphs only. Indeed Zielonka's algorithm is faster than partial solvers on average which was not observed on random graphs in~\cite{HuthKP16}. Equally interestingly, we show that there are instances of our benchmarks of generalized parity games that cannot be solved by the classical recursive algorithm or by any of the partial solvers alone, but that can be solved by algorithms that combine them. We also show that when combined with partial solvers, the performances of the classical recursive algorithms are improved on a large portion of our benchmarks for both the parity and generalized parity cases.

The structure of the paper is as follows. In Section~\ref{sec:preliminaries}, we recall the useful notions on two-player games played on graphs. In particular we recall the notions of parity game and generalized parity game.
In Section~\ref{sec:Ziel}, we explain how the classical recursive algorithms for parity games and generalized parity can be combined with partial solvers and under which hypothesis the resulting algorithm is correct. In Sections~\ref{sec:psolB}-\ref{sec:psolQ}, we present the three partial solvers proposed in \cite{HuthKP13,HuthKP16} for parity games and we explain how to extend them to generalized parity games. For the extended partial solver of Section~\ref{sec:psolC}, we also explain how to transform it into an antichain-based algorithm. In the last Section~\ref{sec:experiments}, we present our experiments that compare the three partial solvers for both parity games and generalized parity games.


\section{Preliminaries} \label{sec:preliminaries}

\paragraph{\bf Game structures.~} 

A \emph{game structure} is a tuple $G = (V_0,V_1,E)$ where
\begin{itemize}
\item $(V,E)$ is a finite directed graph, with $V = V_0 \cup V_1$ the set of vertices and $E \subseteq V \times V$ the set of edges such that\footnote{This condition guarantees that there is no deadlock.} for each $v \in V$, there exists $(v,v') \in E$ for some $v' \in V$,
\item $(V_0,V_1)$ forms a partition of $V$ such that $V_i$ is the set of vertices controlled by player~$i$ with $i \in \{0,1\}$.
\end{itemize}

Given $U \subseteq V$, if $\rest{G}{U} = (V_0 \cap U,V_1 \cap U,E \cap (U \times U))$ has no deadlock, then $\rest{G}{U}$ is called the  \emph{subgame structure} induced by $U$.

A \emph{play} in $G$ is an infinite sequence of vertices $\pi = v_0 v_1 \ldots \in V^{\omega}$ such that $(v_j,v_{j+1}) \in E$ for all $j \in \N$. \emph{Histories} in $G$ are finite sequences $h = v_0 \ldots v_j \in V^+$ defined in the same way. We denote by $\Plays(G)$ the set of plays in $G$ and by $\Plays(v_0)$ the set of plays starting in a given \emph{initial vertex} $v_0$.  
Given a play $\pi = v_0 v_1 \ldots$, the set $\Occ(\pi)$ denotes the set of vertices that occur in $\pi$, and the set $\Occinf(\pi)$ denotes the set of vertices that occur infinitely often in $\pi$, i.e., $\Occ(\pi) = \{v \in V \mid \exists j \geq 0, v_j = v \}$ and $\Occinf(\pi) = \{v \in V \mid \forall j \geq 0, \exists k \geq j,\ v_k = v\}$. 

\paragraph{\bf Strategies.~} 

A \emph{strategy} $\sigma_i$ for player~$i \in \{0,1\}$ is a function $\sigma_i\colon V^*V_i \rightarrow V$ assigning to each history $hv \in V^*V_i$ a vertex $v' = \sigma_i(hv)$ such that $(v,v') \in E$. It is \emph{memoryless} if $\sigma_i(hv) = \sigma_i(h'v)$ for all histories $hv, h'v$ ending with the same vertex $v$, that is, $\sigma_i$ is a function $\sigma_i\colon V_i \rightarrow V$. More generally, it is \emph{finite-memory} if $\sigma_i(hv)$ needs only a finite information out of the history $hv$. This is possible with a finite-state machine that keeps track of histories of plays (see~\cite{2001automata} for a precise definition).
%
Given a strategy $\sigma_i$ of player~$i$, a play $\pi = v_0 v_1 \ldots$ of $G$ is \emph{consistent} with $\sigma_i$ if $v_{j+1} = \sigma_i(v_0 \ldots v_j)$ for all $j \in \N$ such that $v_j \in V_i$. Consistency is naturally extended to histories in a similar way. 

\paragraph{\bf Objectives.~}

An \emph{objective for player~$i$} is a set of plays $\Obj \subseteq \Plays(G)$. It is called \emph{prefix-independent} whenever $\pi \in \Obj$ if and only if $h\pi \in \Obj$ for all plays $\pi, h\pi$. A \emph{game} $(G,\Obj)$ is composed of a game structure $G$ and an objective~$\Obj$ for \emph{player~$0$}. A play $\pi$ is \emph{winning} for player~$0$ if $\pi \in \Obj$, and losing otherwise. The games that we here study are \emph{zero-sum}: player~$1$ has the opposite objective $\overline{\Obj} = V^{\omega} \setminus \Obj$, meaning that a play $\pi$ is winning for player~$0$ if and only if it is losing for player~$1$. Given a game $(G,\Obj)$ and an initial vertex $v_0$, a strategy $\sigma_0$ for player~$0$ is \emph{winning from} $v_0$ if all the plays $\pi \in \Plays(v_0)$ consistent with $\sigma_0$ belong to $\Obj$. 
Vertex $v_0$ is thus called \emph{winning} for player~$0$. We also say that player~$0$ is winning from $v_0$. We denote by $\Win{0}{G}{\Obj}$ the set of such winning vertices $v_0$. 
Similarly we denote by $\Win{1}{G}{\overline\Obj}$ the set of vertices from which player~$1$ can ensure his objective $\overline{\Obj}$. Thus given a player~$i$ and an objective $\Obj$, $\Win{i}{G}{\Obj}$ is the set of vertices from which player~$i$ can ensure $\Obj$ in the game structure~$G$.

A game $(G,\Obj)$ is \emph{determined} if each of its vertices belongs to 
$\Win{0}{G}{\Obj}$ or $\Win{1}{G}{\overline\Obj}$. Martin's theorem~\cite{Martin75} states that all games with Borel objectives are determined. The problem of \emph{solving a game} $(G,\Obj)$ means to decide, given an initial vertex $v_0$, whether player~$0$ is winning from $v_0$ for $\Obj$ (or dually whether player~$1$ is winning from $v_0$ for $\overline\Obj$ when the game is determined). The sets  $\Win{0}{G}{\Obj}$ and $\Win{1}{G}{\overline\Obj}$ are also called the \emph{solutions} of the game.  

\paragraph{\bf Parity and generalized parity objectives.~} 

Let $G$ be a game structure and $d \in \N$ be an integer. Let $\p \colon V \rightarrow [d]$, with $[d] = \{0, 1, \ldots, d\}$, be a \emph{priority} function that associates a priority with each vertex. The \emph{parity} objective $\Obj = \Par(\p)$ asks that the maximum priority seen infinitely often along a play is even, i.e., $\Par(\p) = \{ \pi \in \Plays(G) \mid \max_{v \in \Occinf(\pi)} \p(v) \mbox{ is even}\}$. Games $(G,\Par(\p))$ are called \emph{parity games}. In those games, player~$1$ has the opposite objective $\overline{\Obj}$ equal to $\{ \pi \in \Plays(G) \mid \max_{v \in \Occinf(\pi)} \p(v) \mbox{ is odd}\}$. We denote $\overline\Obj$ by $\OddPar(\p)$. In the sequel, an \emph{$i$-priority} means an even priority if~$i = 0$ and an odd priority if~$i = 1$. For convenience, we also use notation $\iPar{i}(\p)$ such that $\iPar{0}(\p) = \Par(\p)$ and $\iPar{1}(\p) = \OddPar(\p)$. Notice that objective $\iPar{i}(\p)$ is prefix-independent. 

We now consider $k \geq 1$ priority functions $\p_\ell \colon V \rightarrow [d_\ell]$, $\ell \in \{1,\ldots,k\}$. The \emph{generalized parity} objective $\Obj = \ConjPar(\p_1, \ldots, \p_k)$ is the conjunction of the parity objectives defined by all $\p_\ell$, i.e., $\ConjPar(\p_1, \ldots, \p_k) = \bigcap_{\ell = 1}^k \Par(\p_\ell)$. Thus the opposite objective $\overline{\Obj}$ for player~$1$ is equal to $\DisjPar = \bigcup_{\ell = 1}^k \OddPar(\p_\ell)$. Games $(G,\ConjPar(\p_1, \ldots, \p_k))$ are called \emph{generalized parity games}.

Parity games and generalized parity games are determined because their objectives are $\omega$-regular and thus Borel. The complexity of solving those games is stated in the next theorem, as well as the memory requirements for the winning strategies.

\begin{theorem}
\begin{itemize}
\item Solving parity games is in $\mathsf{UP}\cap\mathsf{co}$-$\mathsf{UP}$ and both players have memoryless winning strategies~\cite{Jurdzinski98}.
\item Solving generalized parity games is $\mathsf{co}$-$\mathsf{NP}$-complete, player~$1$ has memoryless winning strategies, and finite-memory strategies are necessary and sufficient for player~$0$ to win~\cite{ChatterjeeHP07}. 
\end{itemize}
\end{theorem}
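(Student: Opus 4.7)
Since the two bullets cite external results, I would present each by sketching the standard arguments, breaking each bullet into an upper bound on memory, a lower bound on memory (if applicable), and a complexity claim.

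For the parity statement, I would first establish memoryless determinacy for both players by induction on the number of vertices of $G$. The base case is trivial. For the inductive step, let $p$ be the maximum priority of $\p$, say $p$ is an $i$-priority, let $U$ be the set of vertices with priority $p$, and let $A=\Attr{i}{G}{U}$ be the classical player-$i$ attractor to $U$. I would solve the subgame $\rest{G}{V\setminus A}$ inductively, obtaining memoryless winning strategies for both players on the smaller arena, and then case-split on whether player~$1-i$ wins all of $\rest{G}{V\setminus A}$: if yes, a memoryless strategy to the complement attractor finishes the argument; if no, one removes the attractor of player~$i$'s winning region in the subgame and recurses. This is just Zielonka's recursion and yields memoryless strategies by construction. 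For the complexity bound, I would invoke Jurdzi\'nski's small progress measures: a function $\mu\colon V\to M\cup\{\top\}$ where $M$ is a tuple space whose size is polynomial in $|V|$ and $d$ serves as a polynomially checkable certificate of player~$0$'s winning region, and by duality of even/odd a symmetric certificate exists for player~$1$. Since the \emph{least} progress measure is unique, a nondeterministic machine that guesses $\mu$ and verifies it has a unique accepting computation, giving $\mathsf{UP}\cap\mathsf{co}$-$\mathsf{UP}$.

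For the generalized parity statement, I would separate the two players. For player~$1$'s objective $\DisjPar=\bigcup_{\ell=1}^{k}\OddPar(\p_\ell)$, I would observe that this is a finite union of parity conditions and in particular can be rewritten as a Rabin condition of polynomial size (each $\OddPar(\p_\ell)$ is a Rabin condition with $\lceil d_\ell/2\rceil$ pairs). Memoryless determinacy of Rabin conditions for the Rabin player, a classical result, then gives player~$1$ memoryless winning strategies. For player~$0$, the positive direction is to show that finite memory suffices by a product construction: maintain a pointer $\ell\in\{1,\ldots,k\}$ indicating the parity objective currently being ``serviced'' and advance $\ell$ each time a sufficiently large even priority of $\p_\ell$ is visited; in the product arena this reduces to a single parity game, where memoryless strategies are winning by the parity result, and projecting back yields a finite-memory strategy in $G$. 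The lower bound on memory, which is what shows finite memory is \emph{necessary}, I would obtain from a family of examples with $k$ priority functions where any winning strategy must cycle through all $k$ conditions, forcing memory at least polynomial in $k$ (and in fact exponential blowup in $k$ is exhibited in~\cite{ChatterjeeHP07}).

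Finally, for the $\mathsf{co}$-$\mathsf{NP}$ bound on solving generalized parity games, membership follows from the memoryless property of player~$1$: to refute winning of player~$0$ from $v_0$, a nondeterministic machine guesses a memoryless strategy $\sigma_1$ of player~$1$ (polynomial certificate) and then verifies in polynomial time that $\sigma_1$ is winning, which in the residual one-player graph amounts to checking that every bottom strongly connected component reachable from $v_0$ violates at least one $\Par(\p_\ell)$, i.e., has an odd maximum $\p_\ell$-priority for some $\ell$. For $\mathsf{co}$-$\mathsf{NP}$-hardness I would reduce from the complement of SAT, encoding each clause by a gadget that lets player~$1$ witness a falsified clause via a dedicated parity component among the $k$ conjuncts. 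The main obstacles in fleshing this out are (i) making the parity-game induction carry memoryless strategies through the attractor manipulations without breaking consistency, and (ii) pinning down the exact memory size for player~$0$ in the generalized case and giving a matching lower bound; both points are exactly where the cited references do the heavy lifting, so I would defer to~\cite{Jurdzinski98,ChatterjeeHP07} for the detailed constructions.
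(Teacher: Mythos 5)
The paper offers no proof of this theorem: it is stated purely as a recollection of known results, with the two bullets delegated wholesale to \cite{Jurdzinski98} and \cite{ChatterjeeHP07}. So there is no in-paper argument to compare yours against; what you have written is a sketch of the standard proofs from the cited works, and in broad outline (Zielonka-style induction for memoryless determinacy of parity games, viewing $\DisjPar$ as a Rabin condition to get memoryless strategies for player~$1$, guessing a memoryless player-$1$ strategy for the $\mathsf{co}$-$\mathsf{NP}$ upper bound, a SAT-complement reduction for hardness) it matches what those references actually do.

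Two steps of your sketch would not survive being fleshed out as written. First, the $\mathsf{UP}\cap\mathsf{co}$-$\mathsf{UP}$ argument: guessing a progress measure $\mu$ and verifying it is \emph{a} progress measure admits many accepting certificates, and verifying that $\mu$ is the \emph{least} progress measure is not a local, polynomial-time check, so uniqueness of the least progress measure does not by itself yield a $\mathsf{UP}$ machine. Jurdzi\'nski's actual proof reduces parity games to mean-payoff and then discounted games, where the certificate is the value vector, which is the \emph{unique} fixed point of a polynomially checkable system of local optimality equations; that uniqueness is what gives unambiguity. Second, your polynomial-time verification that a fixed memoryless player-$1$ strategy is winning is wrong as stated: it does not suffice that every \emph{bottom} SCC of the residual graph violates some $\Par(\p_\ell)$, because player~$0$ could win by staying forever inside a sub-SCC of a non-bottom SCC (or of a bottom SCC whose overall maxima are bad). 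The correct polynomial check is the Emerson--Lei-style recursive SCC decomposition for Streett/generalized-parity nonemptiness: an SCC is good if all maxima are even, or if after deleting the vertices carrying a maximal odd priority in some offending dimension one of the resulting sub-SCCs is good. Both repairs are exactly what the cited references supply, so your plan is salvageable, but these are the two places where the sketch as written has genuine gaps rather than mere omissions.
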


\paragraph{\bf Partial solvers.~} 

In this paper, we study \emph{partial solvers} for parity games and generalized parity games, that are algorithms that partially compute their solutions. A partial solver returns two partial sets of winning vertices $Z_0 \subseteq \Win{0}{G}{\Obj}$ and $Z_1 \subseteq \Win{1}{G}{\overline\Obj}$ such that $\rest{G}{U}$ is a subgame structure with $U = V \setminus (Z_0 \cup Z_1)$. In the next sections, we present the polynomial time partial solvers proposed in~\cite{HuthKP13,HuthKP16} for parity games and show how to extend them to generalized parity games.

\paragraph{\bf Other $\omega$-regular objectives.~} 

We recall some other useful $\omega$-regular objectives. Given a game structure $G$ and subsets $U, U_1, \ldots, U_k \subseteq V$:
\begin{itemize}
\item the \emph{reachability objective} asks to visit $U$ (called \emph{target set}) at least once, i.e. 
$\Reach(U) = \{ \pi \in \Plays(G) \mid \Occ(\pi) \cap U \ne \emptyset \}$,
\item the \emph{safety objective} asks to avoid visiting $U$ (called \emph{bad set}), i.e. 
$\Safe(U) = \{ \pi \in \Plays(G) \mid \Occ(\pi) \cap U = \emptyset\}$,
\item the \emph{B\"uchi objective} asks to visit infinitely often a vertex of $U$, i.e.
$\Buchi(U) = \{ \pi \in \Plays(G) \mid \Occinf(\pi) \cap U \ne \emptyset \}$,
\item the \emph{co-B\"uchi objective} asks to avoid visiting infinitely often $U$, i.e.
$\CoBuchi(U) = \{\pi \in \Plays(G) \mid \Occinf(\pi) \cap U = \emptyset \}$,
\item the \emph{generalized B\"uchi objective} $\GenBuchi(U_1,\ldots,U_k)$ is equal to the intersection $\bigcap_{\ell = 1}^k \Buchi(U_\ell)$.
\end{itemize}

The next theorem summarizes the time complexities for solving those games as implemented in our prototype tool.\footnote{A better algorithm in $O(|V|^2)$ for B\"uchi objectives is proposed in~\cite{ChatterjeeH14}, and in $O(k\cdot |V|^2)$ for generalized B\"uchi objectives in~\cite{ChatterjeeDHL16}.}

\begin{theorem}\label{thm:complexity} 
For solving games $(G,\Obj)$, we have the following time complexities.
\begin{itemize}
\item Reachability, safety objectives:  $O(|E|)$~\cite{2001automata}.
\item B\"uchi, co-B\"uchi, B\"uchi $\cap$ safety objectives: $O(|V|\cdot|E|)$~\cite{2001automata}.
\item Generalized B\"uchi, generalized B\"uchi $\cap$ safety objectives: $O(k\cdot |V|\cdot|E|)$\footnote{This result is obtained thanks to a classical reduction to games with B\"uchi objectives~\cite{BloemCGHJ10}.}.
\end{itemize}
\end{theorem}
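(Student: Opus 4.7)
My overall plan is to build up the complexities from reachability to generalized B\"uchi by using the attractor as the central computational primitive. For reachability and safety, I would implement $\Attr{i}{G}{U}$ as a reverse breadth-first search that inspects each edge at most once: each player-$1$ vertex $v$ carries a counter initialized to its out-degree and is added to the attractor only once this counter drops to $0$, while a player-$0$ vertex is added as soon as any successor enters the attractor. This yields $O(|V|+|E|)=O(|E|)$. The winning regions are then $\Win{0}{G}{\Reach(U)} = \Attr{0}{G}{U}$ and $\Win{0}{G}{\Safe(U)} = V \setminus \Attr{1}{G}{U}$.

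For B\"uchi, I would run the classical fixpoint refinement. Start with $W := V$ and iterate: compute $T := \Attr{0}{\rest{G}{W}}{W \cap U}$; if $T = W$ return $W$; otherwise replace $W$ by $W \setminus \Attr{1}{\rest{G}{W}}{W \setminus T}$. The invariant is that every discarded vertex is losing for player~$0$ for objective $\Buchi(U)$; when the loop stops, player~$0$ can force, from any vertex of $W$, a visit to $U$ within $\rest{G}{W}$, hence infinitely many visits by repeating the resulting memoryless strategy. Each iteration costs $O(|E|)$ and strictly shrinks $W$, giving at most $|V|$ iterations and the announced $O(|V|\cdot|E|)$. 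Co-B\"uchi is handled dually by swapping players, and $\Buchi(U) \cap \Safe(U')$ follows from a preliminary $O(|E|)$ restriction to $V \setminus \Attr{1}{G}{U'}$ before running the B\"uchi algorithm on the remaining subgame.

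For $\GenBuchi(U_1, \ldots, U_k)$ I would mimic the B\"uchi loop but test each target in turn. Start with $W := V$; at each step, for every $\ell \in \{1, \ldots, k\}$ compute $T_\ell := \Attr{0}{\rest{G}{W}}{W \cap U_\ell}$; if $T_\ell = W$ for all $\ell$, return $W$; otherwise pick any $\ell$ with $T_\ell \neq W$ and replace $W$ by $W \setminus \Attr{1}{\rest{G}{W}}{W \setminus T_\ell}$. Each outer iteration costs $O(k \cdot |E|)$ and removes at least one vertex, so at most $|V|$ iterations yield the claimed $O(k \cdot |V| \cdot |E|)$. The generalized B\"uchi intersected with safety is again handled by a single preliminary safety restriction.

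The main obstacle is establishing correctness of the generalized B\"uchi loop. That every removed vertex is losing is immediate since it lies in $\Attr{1}{}{}$ of some $W \setminus T_\ell$. The converse, that the final $W$ is winning for player~$0$, requires combining the $k$ memoryless reachability strategies (one witnessing each $T_\ell = W$) into a finite-memory strategy for player~$0$ that cyclically targets $U_1, U_2, \ldots, U_k$; this is precisely the classical reduction of~\cite{BloemCGHJ10} to a single B\"uchi objective on a product of size $O(k \cdot |V|)$, and it also explains why finite memory is necessary for player~$0$ here, consistently with the memory bounds recalled earlier in the paper.
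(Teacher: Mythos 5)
Your proposal is correct, but note that the paper does not actually prove Theorem~\ref{thm:complexity}: it cites these bounds as known results (\cite{2001automata} for the first two items, and a footnote invoking a reduction to B\"uchi games \cite{BloemCGHJ10} for the third), so you have reconstructed the standard arguments rather than paralleled an in-paper proof. Your treatment of reachability, safety, B\"uchi, co-B\"uchi and the $\Buchi(U)\cap\Safe(U')$ case via a preliminary restriction to $V\setminus\Attr{1}{G}{U'}$ is exactly the classical route the citation points to. The one place where you genuinely diverge is the generalized B\"uchi item: the paper's footnote goes through a product game $G\times\{1,\ldots,k\}$ with a single B\"uchi condition, whereas you run a direct fixpoint that computes $k$ attractors $T_\ell=\Attr{0}{\rest{G}{W}}{W\cap U_\ell}$ per round and removes a player-$1$ attractor of some $W\setminus T_\ell$. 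Your route has the advantage that the $O(k\cdot|V|\cdot|E|)$ bound falls out immediately ($O(k\cdot|E|)$ per round, at most $|V|$ rounds, since the removed attractor is nonempty), whereas a naive application of the $O(|V|\cdot|E|)$ B\"uchi bound to the product of size $k|V|$ vertices and $k|E|$ edges would only give $O(k^2\cdot|V|\cdot|E|)$ and requires an extra argument about the number of fixpoint iterations to recover the stated bound. You are also right to flag the correctness of the terminal $W$ as the only delicate point; the cyclic finite-memory strategy that rotates through the $k$ memoryless attractor strategies (staying inside the $1$-trap $W$) closes it, and is consistent with the finite-memory requirement for player~$0$ recalled in Theorem~1.
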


\paragraph{\bf Attractors.~} 
We conclude this section with some basic notions related to the concept of attractor. Let $G$ be a game structure. The \emph{controllable predecessors} for player~$i$ of a set $U \subseteq V$, denoted by $\CPre{i}{G}{U}$, is the set of vertices from which player~$i$ can ensure to visit $U$ in \emph{one step}. Formally, 
\begin{eqnarray}
\CPre{i}{G}{U} &=& \{v \in V_i \mid \exists (v,v') \in E, v' \in U \} \cup \nonumber \\
&& \{v \in V_{1-i} \mid \forall (v,v') \in E, v' \in U \}. \label{eq:Cpre}
\end{eqnarray}
The \emph{attractor $\Attr{i}{G}{U}$ for player~$i$} is the set of vertices from which he can ensure to visit $U$ in \emph{any} number of steps (including zero steps). It is constructed by induction as follows: $\Attr{i}{G}{U} =  \bigcup_{j \geq 0} X_j$ such that:
\begin{eqnarray*}
X_0 &=& U, \\
X_{j+1} &=& X_j \cup \CPre{i}{G}{X_j} \mbox{ for all } j \in \N. 
\end{eqnarray*}
It is therefore equivalent to the winning set $\Win{i}{G}{\Reach(U)}$. The \emph{\posAttr} $\PosAttr{i}{G}{U}$ is the set of vertices from which player $i$ can ensure to visit $U$ in any \emph{positive} number of steps, that is,  $\PosAttr{i}{G}{U} = \bigcup_{j \geq 0} X_j$ such that:
\begin{eqnarray}
X_0 &=& \CPre{i}{G}{U}, \label{eq:AttrInit} \\
X_{j+1} &=& X_j \cup \CPre{i}{G}{X_j \cup U} \mbox{ for all } j \in \N. \label{eq:Attr} 
\end{eqnarray}

Given $U \subseteq V$, we say that $U$ is an \emph{$i$-trap} if for all $v \in U \cap V_i$ and all $(v,v') \in E$, we have $v' \in U$ (player $i$ cannot leave $U$), and for all $v \in U \cap V_{1-i}$, there exists $(v,v') \in E$ such that $v' \in U$ (player $1-i$ can ensure to stay in $U$). Therefore $\rest{G}{U}$ is a subgame structure. When $V \setminus U$ is an $i$-trap, we also use the notation $\subgame{G}{U}$ (instead of $\rest{G}{V \setminus U}$) for the subgame structure induced by $V \setminus U$. The next properties are classical.

\begin{theorem}[\cite{2001automata}] \label{thm:attr} 
Let $G$ be a game structure, $i \in \{0,1\}$ be a player, and $U \subseteq V$ be a subset of vertices. Then
\begin{itemize}
\item the attractor $\Attr{i}{G}{U}$ and the \posAttr\ $\PosAttr{i}{G}{U}$ can be computed in $O(|E|)$ time,
\item the set $V \ssetminus \Attr{i}{G}{U}$ is an $i$-trap.
\end{itemize}
\end {theorem}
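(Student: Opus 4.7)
The plan is to prove the two bullet points separately, starting with the more mechanical computational bound and then handling the trap property.

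For the $O(|E|)$ time complexity of computing $\Attr{i}{G}{U}$, I would implement the fixpoint of equations (\ref{eq:AttrInit})--(\ref{eq:Attr}) with a standard worklist algorithm. Concretely, I maintain for each vertex $v \in V_{1-i}$ a counter $c(v)$ initialized to its out-degree, and a queue containing $U$ (resp.\ $\CPre{i}{G}{U}$ for the positive attractor). When a vertex $w$ is dequeued and marked as belonging to the attractor, I scan its incoming edges $(v,w)$: if $v \in V_i$ and $v$ is not yet marked, I mark and enqueue $v$; if $v \in V_{1-i}$, I decrement $c(v)$ and enqueue $v$ once $c(v)$ reaches $0$. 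Each edge is inspected exactly once (when its target is first marked), and each counter initialization is charged to the out-edges of the corresponding vertex, giving total work $O(|V|+|E|)$, which is $O(|E|)$ since the absence of deadlocks guarantees $|E| \geq |V|$. The only subtlety is the positive attractor: here a vertex $v \in U$ must be added to the attractor only if it can be forced to revisit $U$ in at least one step, so I initialize the worklist with $\CPre{i}{G}{U}$ rather than $U$ and use $U$ purely as a target set in the transitions (equation~(\ref{eq:Attr})), which corresponds exactly to replacing $X_0 = U$ by $X_0 = \CPre{i}{G}{U}$.

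For the second bullet, I would argue by appealing to the fixpoint property of $\Attr{i}{G}{U}$. Let $A = \Attr{i}{G}{U}$ and $T = V \ssetminus A$. Since $A$ is the least fixpoint of the operator $X \mapsto X \cup \CPre{i}{G}{X}$ starting from $U$, we have $\CPre{i}{G}{A} \subseteq A$. I then check the two defining conditions of an $i$-trap. First, take $v \in T \cap V_i$; if there were some edge $(v,v')$ with $v' \in A$, then by the first disjunct of (\ref{eq:Cpre}) we would have $v \in \CPre{i}{G}{A} \subseteq A$, contradicting $v \in T$. Second, take $v \in T \cap V_{1-i}$; if every edge $(v,v')$ satisfied $v' \in A$, then by the second disjunct of (\ref{eq:Cpre}) we would again have $v \in \CPre{i}{G}{A} \subseteq A$, a contradiction. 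Hence at least one edge leaves $v$ inside $T$, and $T$ is indeed an $i$-trap.

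I expect the main obstacle to be the careful amortized accounting for the $O(|E|)$ bound, in particular making explicit that counters for $V_{1-i}$-vertices can be initialized and decremented within the edge budget, and that the positive attractor variant indeed fits the same analysis. The trap property itself is a short fixpoint argument and should not cause difficulty.
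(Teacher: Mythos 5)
Your proof is correct and is exactly the standard argument: the paper does not prove Theorem~\ref{thm:attr} at all but cites it as a classical result from~\cite{2001automata}, and your worklist/counter implementation for the $O(|E|)$ bound together with the fixpoint argument ($\CPre{i}{G}{A} \subseteq A$ for $A = \Attr{i}{G}{U}$, then checking both disjuncts of~(\ref{eq:Cpre})) is the textbook proof the citation refers to. Both halves check out, including the handling of the positive attractor via the modified initialization $X_0 = \CPre{i}{G}{U}$.
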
 

\section{Zielonka's algorithm with partial solvers} \label{sec:Ziel}




The classical algorithm used to solve parity games is the recursive algorithm proposed by Zielonka in~\cite{zielonka98}. Despite its relatively bad theoretical $O(|V|^d)$ time complexity, it is known to outperform other algorithms in practice~\cite{FriedmannL09,Dijk18}. This algorithm solves parity games $(G,\Par(\p))$ by working in a divide-and-conquer manner, combining solutions of subgames to obtain the solution of the whole game. It returns two sets $\{W_0,W_1\}$ such that $W_i = \Win{i}{G}{\iPar{i}(\p)}$ is the winning set for player~$i$. The recursion in this algorithm is performed both on the number of priorities and of nodes in the game. See Algorithm~\ref{algo:PartialZiel} in which no call to a partial solver is performed (therefore line 4 is to be replaced by $\{\Zset_0,\Zset_1\} = \{\emptyset, \emptyset\}$). This algorithm is called Algorithm \Ziel.

Let us explain how Zielonka's algorithm can be combined with a partial solver for parity games (see Algorithm~\ref{algo:PartialZiel}). When $V$ is not empty, we first execute the partial solver. If it solves completely the game, we are done. Otherwise, let $\overline G$ be the subgame of $G$ that was not solved. We then execute the Zielonka instructions on $\overline{G}$ and we return the union of the partial solutions obtained by the partial solver with the solutions obtained for $\overline G$. Proposition~\ref{prop:PartialZiel} below guarantees the soundness of this approach under the hypothesis that if some player wants to escape from $\overline G$, then he will necessarily go to the partial solution of the other player.

\begin{algorithm}
	\caption{\PartialZiel($G,\p$)}
	\begin{algorithmic}[1]
	\If {$V = \emptyset$} \label{line:basic1}
		\State $W_0 = \emptyset$, $W_1 = \emptyset$
		\State return $\{W_0,W_1\}$
	\Else
		\State $\{\Zset_0,\Zset_1\} =$ PSolver$(G,\alpha)$ \label{line:partialsol}
		\State $\overline{G} = G \setminus (\Zset_0 \cup \Zset_1)$, $\overline{V} = V \setminus (\Zset_0 \cup \Zset_1)$
		 	\If  {$\overline{V} = \emptyset$} \label{line:basic2}
			\State return $\{\Zset_0,\Zset_1\}$
		\Else
			\State $p = \max \{\p(v) \mid v \in \overline{V}\}$
			\State $i = p \bmod 2$
			\State $U = \{v \in \overline{V} \mid \p(v) = p\}$
			\State $X = \Attr{i}{\overline{G}}{U}$ \label{line:X}
			\State $\{W'_i,W'_{1-i}\} =$ \PartialZiel$(\overline{G} \setminus X,\p)$
			\If {$W'_{1-i} = \emptyset$}
				\State $W_{i} = \Zset_i \cup W'_i \cup X$, $W_{1-i} = \Zset_{1-i}$
			\Else 
				\State $X = \Attr{1-i}{\overline{G}}{W'_{1-i}}$ \label{line:Y}
				\State $\{W''_i,W''_{1-i}\} =$ \PartialZiel$(\overline{G} \setminus X,\p)$
				\State $W_{i} = \Zset_i \cup W''_{i}$, $W_{1-i} = \Zset_{1-i} \cup W''_{1-i} \cup X$
			\EndIf
			\State return $\{W_i,W_{1-i}\}$ 
		\EndIf
	\EndIf
	\end{algorithmic}
	\label{algo:PartialZiel}
\end{algorithm}


\begin{proposition} \label{prop:PartialZiel}
Suppose that the partial solver used in Algorithm~\ref{algo:PartialZiel} computes partial solutions $\Zset_0,\Zset_1$ such that for all $(v,v') \in E$ and $i \in \{0,1\}$, if $v \in \overline{V} \cap V_i$ and $v' \not\in \overline{V}$, then $v' \in Z_{1-i}$.
Then Algorithm \PartialZiel\ correctly computes the sets $\Win{i}{G}{\iPar{i}(\p)}$, for $i \in \{0,1\}$.
\end{proposition}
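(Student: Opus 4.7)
The plan is to proceed by induction on $|V|$, proving at each level that the returned $W_i$ equals $\Win{i}{G}{\iPar{i}(\p)}$ for $i \in \{0,1\}$. The base case $V = \emptyset$ is immediate, and the case $\overline{V} = \emptyset$ reduces to the partial-solver contract together with determinacy. The interesting case is $\overline{V} \neq \emptyset$, for which the central identity to establish is
\[
\Win{i}{G}{\iPar{i}(\p)} \;=\; Z_i \cup \Win{i}{\overline{G}}{\iPar{i}(\p)}, \qquad i \in \{0,1\}.
\]
Once this is secured, the remainder of the algorithm is the classical Zielonka recursion on $\overline{G}$, whose correctness follows from the induction hypothesis applied to the strictly smaller subgame structures $\overline{G} \setminus X$ arising from the two recursive calls.

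For the $\supseteq$ inclusion, the part $Z_i \subseteq \Win{i}{G}{\iPar{i}(\p)}$ is the partial-solver guarantee. For $v \in \Win{i}{\overline{G}}{\iPar{i}(\p)}$, I plan to combine a memoryless winning strategy $\sigma_i$ for player~$i$ in $\overline{G}$ with a winning strategy $\tau_i$ of player~$i$ from $Z_i$ in $G$ into a strategy $\hat{\sigma}_i$: follow $\sigma_i$ while the play is in $\overline{V}$, and switch to $\tau_i$ upon the first visit to $Z_i$ (the behaviour on $Z_{1-i}$ is irrelevant). A play from $v$ consistent with $\hat{\sigma}_i$ either stays in $\overline{V}$ forever, in which case it wins via $\sigma_i$, or it eventually leaves $\overline{V}$; in the latter case the last vertex in $\overline{V}$ cannot belong to $V_i$ (as $\sigma_i$ never exits $\overline{V}$), so it lies in $V_{1-i}$, and the hypothesis of the proposition forces the next vertex to lie in $Z_i$, from where $\tau_i$ wins in $G$. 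Prefix-independence of $\iPar{i}(\p)$ then carries this victory back to the whole play.

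For the $\subseteq$ inclusion, I would invoke determinacy (parity games are Borel, hence determined by Martin's theorem) on both $G$ and $\overline{G}$ to rewrite
\[
V \setminus \bigl(Z_i \cup \Win{i}{\overline{G}}{\iPar{i}(\p)}\bigr) \;=\; Z_{1-i} \cup \Win{1-i}{\overline{G}}{\iPar{1-i}(\p)},
\]
and then apply the $\supseteq$ inclusion already established for player~$1-i$ to conclude that this set is contained in $\Win{1-i}{G}{\iPar{1-i}(\p)} = V \setminus \Win{i}{G}{\iPar{i}(\p)}$. This closes the identity and the inductive step then follows by the standard Zielonka correctness verification on $\overline{G}$.

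The main obstacle I anticipate is the strategy-combination argument in the $\supseteq$ direction. Its subtlety is not in defining $\hat{\sigma}_i$, but in using the hypothesis with the correct choice of player: because the exiting move is necessarily made by player~$1-i$, one applies the hypothesis at a vertex of $\overline{V} \cap V_{1-i}$ to deduce that the successor lies in $Z_i$ (player~$i$'s safe region), rather than in $Z_{1-i}$, which would be disastrous for the combined strategy. Without the hypothesis, player~$1-i$ could leak the play into the losing region $Z_{1-i}$ and break the construction. The rest of the proof is bookkeeping via the standard Zielonka correctness argument, now applied to $\overline{G}$ via the recursive calls to \PartialZiel.
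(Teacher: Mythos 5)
Your proposal is correct and follows essentially the same route as the paper: both reduce correctness on $G$ to correctness on $\overline{G}$ by stitching a winning strategy of player~$i$ in the subgame with a winning strategy from $Z_i$, using the edge hypothesis (applied to the escaping player $1-i$) to guarantee that any exit from $\overline{V}$ lands in $Z_i$, and then invoking determinacy together with the induction hypothesis and the standard Zielonka argument on $\overline{G}$. Your version is slightly more modular in isolating the identity $\Win{i}{G}{\iPar{i}(\p)} = Z_i \cup \Win{i}{\overline{G}}{\iPar{i}(\p)}$ and deriving the $\subseteq$ inclusion from determinacy, but the substance matches the paper's proof.
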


\begin{proof}
The proof is by induction and it supposes the soundness of both the (classical) \Ziel\ algorithm and the partial solver. Clearly, Algorithm~\ref{algo:PartialZiel} is correct in the basic cases $V = \emptyset$ (line~\ref{line:basic1}) and $V \setminus Z = \emptyset$ (line~\ref{line:basic2}). 
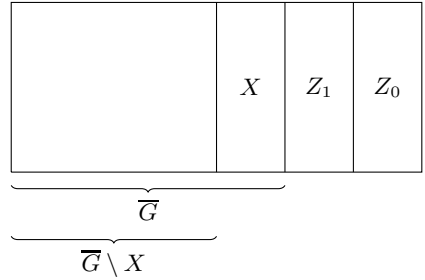
\begin{figure}[ht]
	\centering
	\begin{tikzpicture}[scale=0.9]
		\draw (0,0) -- (6,0) -- (6,2.5) -- (0,2.5) -- (0,0);
		\draw (3,0) -- (3,2.5);
		\draw (4,0) -- (4,2.5);
		\draw (5,0) -- (5,2.5);
		\node[] at (3.5,1.25) {$X$};
		\node[] at (4.5,1.25) {$Z_1$};
		\node[] at (5.5,1.25) {$Z_0$};
		
		\draw[decoration={brace,mirror,raise=5pt},decorate] (0,0) -- node[below=7pt] {$\overline{G}$} (4,0);
		\draw[decoration={brace,mirror,raise=5pt},decorate] (0,-0.75) -- node[below=7pt] {$\overline{G}\setminus X$} (3,-0.75);
	\end{tikzpicture}
	\caption{The partial solutions~$Z_0, Z_1, X$ before a recursive call to Algorithm~\ref{algo:PartialZiel}.}
	\label{fig:PartialZiel}
\end{figure}

Otherwise (see Figure~\ref{fig:PartialZiel}), let $Z_0, Z_1$ be the partial solutions computed by the partial solver (line~\ref{line:partialsol}). As $\overline G$ is a subgame by definition of a partial solver, a recursive call to Algorithm~\ref{algo:PartialZiel} can be executed with ${\overline{G} \setminus X}$ such that $X$ is the attractor computed in line~\ref{line:X} or~\ref{line:Y}. We know that:
\begin{enumerate}
    \item the partial solutions computed by the partial solver are correct,
    \item by induction hypothesis, the solutions computed recursively on $\overline{G} \setminus X$ are correct,
    \item by Algorithm \Ziel, the solutions computed on $\overline{G}$ are thus correct, such that $X$ is part of the solution for some player~$i_0$,
    \item by Theorem~\ref{thm:attr}, as $X$ is an attractor for player~$i_0$, $\overline{V}\setminus X$ is a $i_0$-trap in $\overline G$.   
\end{enumerate}
We can thus define strategies $\sigma_i$, $i \in \{0,1\}$, in the whole game $G$ that are built (in the expected way) from the winning strategies of both players in the subgame $\overline G$ (by items 2 and 3) and their winning strategies with respect to $Z_0, Z_1$ in $G$ (by item 1). These strategies are winning because: 
\begin{itemize}
    \item any player~$i$ that can escape from $\overline G$ has no interest to escape because he will go to $Z_{1-i}$ by the hypothesis of Proposition~\ref{prop:PartialZiel}, a part of $G$ where he is not winning (by item 1),
    \item only player~$1 - i_0$ can escape from $\overline{G} \setminus X$ to go to $X$ (by item 4), but he has no interest to escape since he is not winning in $X$ (by item 3).\qed
\end{itemize}
\end{proof}

An extension of Zielonka's algorithm to generalized parity games is introduced in \cite{ChatterjeeHP07}\footnote{This algorithm is referred to as ``the classical algorithm'' in~\cite{ChatterjeeHP07}.}. This algorithm, that we call \GenZiel, has $O(|E|\cdot |V|^{2D})\binom{D}{d_1,\ldots,d_k}$ time complexity where $D = \Sigma_{\ell = 1}^k d_\ell$. While more complex, it has the same behavior with respect to the recursive call: an attractor $X$ is computed as part of the solution of one player and a recursive call is executed on the subgame $G \setminus X$. Therefore, as done in Algorithm~\ref{algo:PartialZiel}, this algorithm can be combined with a partial solver for generalized parity games, as long as the latter satisfies the hypothesis of Proposition~\ref{prop:PartialZiel}.

In the following sections, we present three partial solvers for parity games and their extension to generalized parity games. They all satisfy the assumptions of Proposition~\ref{prop:PartialZiel} because the partial solutions that they compute, if non empty, are composed of one or several attractors (see Theorem~\ref{thm:attr}).

\section{Algorithms \psolB\ and \GenpsolB} \label{sec:psolB}

In this section, we present simple partial solvers for parity games and generalized parity games. More elaborate and powerful partial solvers are presented in the next two sections. These first partial solvers are based on Propositions~\ref{prop:psolB} and~\ref{prop:GenpsolB} that are direct consequences of the definition of parity games and generalized parity games.
The first proposition states that for parity games, if player~$i$ can ensure to visit infinitely often an $i$-priority without visiting infinitely often a greater $(1-i)$-priority, then he is winning for $\iPar{i}(\p)$.

\begin{proposition} \label{prop:psolB}
Let $(G,\Par(\p))$ be a parity game and $p \in [d]$ be an $i$-priority. Let $U = \{v \in V \mid \p(v) = p \}$ and $U' = \{ v \in V \mid \p(v)$ is a $(1-i)$-priority and $\p(v) > p \}$. For all $v_0 \in V$, if $v_0 \in \Win{i}{G}{\Buchi(U) \cap \CoBuchi(U')}$, then $v_0 \in \Win{i}{G}{\iPar{i}(\p)}$.
\qed\end{proposition}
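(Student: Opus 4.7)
The plan is to show that any strategy $\sigma_i$ that wins for the conjunction $\Buchi(U) \cap \CoBuchi(U')$ from $v_0$ is automatically winning from $v_0$ for $\iPar{i}(\p)$. So I fix such a strategy $\sigma_i$ (which exists by the hypothesis $v_0 \in \Win{i}{G}{\Buchi(U) \cap \CoBuchi(U')}$), take an arbitrary play $\pi \in \Plays(v_0)$ consistent with $\sigma_i$, and argue that the maximum priority occurring infinitely often along $\pi$ is an $i$-priority.

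The argument then splits into two easy observations about $\Occinf(\pi)$. First, because $\pi$ satisfies $\Buchi(U)$, at least one vertex of priority exactly $p$ lies in $\Occinf(\pi)$, so the quantity $m := \max_{v \in \Occinf(\pi)} \p(v)$ satisfies $m \geq p$. Second, because $\pi$ satisfies $\CoBuchi(U')$, no vertex whose priority is a $(1-i)$-priority strictly greater than $p$ can lie in $\Occinf(\pi)$. Combining these, I do a short case split: if $m = p$ then $m$ is an $i$-priority by hypothesis on $p$; if $m > p$ then $m$ cannot be a $(1-i)$-priority (by the second observation), so $m$ must again be an $i$-priority. In either case $\pi \in \iPar{i}(\p)$, and since $\pi$ was an arbitrary $\sigma_i$-consistent play from $v_0$, this gives $v_0 \in \Win{i}{G}{\iPar{i}(\p)}$.

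There is no real obstacle: the statement is essentially a reformulation of the parity winning condition, stating that \emph{max priority is $i$-parity} is equivalent to \emph{some $i$-priority $p$ is visited infinitely often while no $(1-i)$-priority above $p$ is}, combined with the monotonicity of winning strategies under implication of objectives. The only point that needs a bit of care is keeping the parity bookkeeping consistent in the case $m > p$, to make sure that having excluded $(1-i)$-priorities above $p$ is enough to force $m$ itself to be an $i$-priority; but since every priority is either an $i$-priority or a $(1-i)$-priority, this is immediate.
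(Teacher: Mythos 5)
Your argument is correct and is exactly the intended one: the paper states this proposition without proof, calling it a direct consequence of the definition of parity games, and your case split on whether the maximum infinitely-recurring priority equals or exceeds $p$ is the standard way to make that explicit. Nothing further is needed.
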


The second proposition states that for generalized parity games, \emph{(i)} if player~$0$ can ensure to visit infinitely often a $0$-priority $p_\ell$ without visiting infinitely often a $1$-priority greater than $p_\ell$  on \emph{all dimensions} $\ell$, then he is winning in the generalized parity game, and \emph{(ii)} if player~$1$ can ensure to visit infinitely often a $1$-priority $p_\ell$ without visiting infinitely often a $0$-priority greater than $p_\ell$  on \emph{some dimension} $\ell$, then he is winning in the generalized parity game. 

\begin{proposition} \label{prop:GenpsolB}
Let $(G,\ConjPar(\p_1,\ldots,\p_k))$ be a generalized parity game.
\begin{itemize}
\item Let $p = (p_1,\ldots,p_k)$, with $p_\ell \in [d_\ell]$, be a vector of $0$-priorities. For all $\ell$, let $U_\ell = \{v \in V \mid \p_\ell(v) = p_\ell \}$. Let $U' = \{ v \in V \mid \exists \ell, \, \p_\ell(v)$ is a $1$-priority and $\p_\ell(v) > p_\ell \}$. For all $v_0 \in V$, 
\begin{align*}
     & v_0 \in \Win{0}{G}{\GenBuchi(U_1,\ldots,U_k) \cap \CoBuchi(U')}\\
    \implies & v_0  \in \Win{0}{G}{\ConjPar(\p_1,\ldots,\p_k)}
\end{align*}
\item Let $\ell \in \{1,\ldots, k\}$ and $p_\ell \in [d_\ell]$ be a $1$-priority. Let $U = \{v \in V \mid \p_\ell(v) = p_\ell \}$ and $U' = \{ v \in V \mid \p_\ell(v)$ is a $0$-priority and $\p_\ell(v) > p_\ell \}$. For all $v_0 \in V$, 
\begin{align*}
     & v_0 \in \Win{1}{G}{\Buchi(U) \cap \CoBuchi(U')}\\
    \implies & \Win{1}{G}{\DisjPar(\p_1,\ldots,\p_k)}
\end{align*}
\end{itemize}
\qed\end{proposition}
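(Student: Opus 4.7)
The plan is to unfold the definitions of the Büchi, co-Büchi, and (generalized) parity objectives and show that every play satisfying the hypothesis automatically satisfies the conclusion; the implication at the level of plays lifts immediately to winning strategies, since if $\sigma_i$ forces every consistent play into objective $\Omega_1 \subseteq \Omega_2$, then $\sigma_i$ also forces every consistent play into $\Omega_2$.

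For the first item, I would fix a strategy $\sigma_0$ of player~0 that is winning from $v_0$ for $\GenBuchi(U_1,\ldots,U_k) \cap \CoBuchi(U')$, pick an arbitrary play $\pi$ from $v_0$ consistent with $\sigma_0$, and show that for every dimension $\ell \in \{1,\ldots,k\}$ the maximum $\p_\ell$-priority in $\Occinf(\pi)$ is even. Fixing $\ell$: from $\pi \in \Buchi(U_\ell)$, some vertex of $\p_\ell$-priority exactly $p_\ell$ is visited infinitely often, so the maximum $\p_\ell$-priority in $\Occinf(\pi)$ is at least $p_\ell$ (which is even). From $\pi \in \CoBuchi(U')$, no vertex $v$ whose $\p_\ell$-priority is a $1$-priority strictly greater than $p_\ell$ lies in $\Occinf(\pi)$. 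Hence every $\p_\ell$-priority that appears infinitely often and is strictly greater than $p_\ell$ must be a $0$-priority, so the maximum is even. This gives $\pi \in \Par(\p_\ell)$; taking intersection over $\ell$ yields $\pi \in \ConjPar(\p_1,\ldots,\p_k)$.

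For the second item, I would fix a winning strategy $\sigma_1$ of player~1 from $v_0$ for $\Buchi(U) \cap \CoBuchi(U')$ and argue similarly on the single dimension $\ell$. For any play $\pi$ consistent with $\sigma_1$, visiting $U$ infinitely often forces the maximum $\p_\ell$-priority in $\Occinf(\pi)$ to be at least $p_\ell$ (which is odd), while the co-Büchi condition excludes all $0$-priorities of $\p_\ell$ that are strictly greater than $p_\ell$ from $\Occinf(\pi)$; hence the maximum $\p_\ell$-priority in $\Occinf(\pi)$ is odd, giving $\pi \in \OddPar(\p_\ell) \subseteq \DisjPar(\p_1,\ldots,\p_k)$.

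There is essentially no hard step: the statement is a direct check against the definitions, so the main thing to keep an eye on is the asymmetry between the two items. In the player~0 case one must forbid infinitely many visits to \emph{higher $1$-priorities on every dimension simultaneously} (hence the single co-Büchi set $U'$ covers all dimensions at once, and the Büchi requirement becomes generalized Büchi), whereas in the player~1 case it suffices to exhibit \emph{one} dimension $\ell$ on which higher $0$-priorities are avoided (hence ordinary Büchi and co-Büchi on that single dimension). This asymmetry mirrors the conjunctive versus disjunctive nature of the two objectives and is the only conceptual point worth articulating carefully.
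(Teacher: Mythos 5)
Your proof is correct and is exactly the argument the paper has in mind: the paper states Proposition~\ref{prop:GenpsolB} without proof, calling it a direct consequence of the definitions, and your dimension-by-dimension unfolding of the Büchi/co-Büchi conditions (together with the observation that a play-level inclusion of objectives lifts to winning sets) is precisely that direct check. Your closing remark on the conjunctive/disjunctive asymmetry between the two items also matches the paper's own informal explanation preceding the statement.
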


\paragraph{\bf Partial Solver for Parity Games.~}
A partial solver for parity games can be easily derived from Proposition~\ref{prop:psolB}.  Given a parity game $(G,\Par(\p))$, the polynomial time algorithm~\psolB\ (see Algorithm~\ref{algo:psolB}) tries to find winning vertices for each player by applying this proposition as follows. Let us denote by $W_p$ the winning set computed by the algorithm for priority $p$ (line~\ref{line:win}) and let us suppose (for simplicity) that the loop in line~\ref{line:loop} treats the priorities from the highest one $d$ to the lowest one $0$. This algorithm thus computes $W_d$, $W_{d-1}, \ldots$, until finding $W_p \neq \emptyset$. At this stage, the algorithm was able to find some winning vertices. It then repeats the process on the subgame $\subgame{G}{W_p}$ to find other winning vertices. At the end of the execution, it returns the partial solutions $\{Z_0,Z_1\}$ that it was able to compute. 

\begin{algorithm}
	\caption{\psolB($G,\p$)}
	\begin{algorithmic}[1]
		\ForEach{$p \in [d]$} \label{line:loop}
		    \State $i = p \bmod2$
			\State $U = \{v \in V \mid \p(v) = p\}$
            \State $U' = \{ v \in V \mid \p(v) \mbox{ is a $(1-i)$-priority and } \p(v) > p \}$
            \State $W = \Win{i}{G}{\Buchi(U) \cap \CoBuchi(U')}$ \label{line:win}
			\If {$W \neq \emptyset$} 
                \State $\{Z_i,Z_{1-i}\} =$ \psolB($G \setminus W,\p)$ \label{line:subgame}
			    \State return $\{Z_i \cup W,Z_{1-i}\}$ 
			\EndIf
		\EndFor
		\State return $\{\emptyset,\emptyset\}$ 
	\end{algorithmic}
	\label{algo:psolB}
\end{algorithm}

Notice that we could replace line~\ref{line:win} by 
$$W' = \Win{i}{G}{\Buchi(U) \cap \Safe(U')},~~  W = \Attr{i}{G}{W'}$$
since $\Attr{i}{G}{W'} \subseteq \Win{i}{G}{\Buchi(U) \cap \CoBuchi(U')}$ (as the parity objective is prefix-independent, it is closed under attractor; and computing the attractor is necessary to get a subgame for the recursive call).   
This variant is investigated in~\cite{HuthKP13,HuthKP16} under the name of Algorithm~psolB.


\paragraph{\bf Partial Solver for Generalized Parity Games.~}
Similarly, a partial solver for generalized parity games can be derived from Proposition~\ref{prop:GenpsolB}. Instead of considering all $p \in [d]$ as done in line~\ref{line:loop} of Algorithm~\psolB, we have here a more complex loop on all elements that are
\begin{itemize}
\item either a \emph{$1$-priority} $p$ for some given \emph{dimension} $\ell$ (case of player~$1$),
\item or a \emph{vector} $(p_1,\ldots,p_k)$ with \emph{$0$-priorities} $p_\ell$ (case of player~$0$).
\end{itemize}
We suppose that we have a \emph{list $\List$} of all possible such elements $(p,\ell)$ and $(p_1,\ldots,p_k)$. Given a generalized parity game, Algorithm~\GenpsolB\ (see Algorithm~\ref{algo:GenpsolB}) tries to find winning vertices for both players by applying Proposition~\ref{prop:GenpsolB}. It returns the partial solutions $\{Z_0,Z_1\}$ that it was able to compute. 
As in Algorithm~\psolB, objective $\CoBuchi(U')$ can be replaced by $\Safe(U')$ in lines~\ref{line:Safe2} and~\ref{line:Safe1} (with the addition of an attractor computation). This modification yields an algorithm in $O((\Pi_{\ell = 1}^k  \frac{d_\ell}{2}) \cdot k \cdot |V|^2 \cdot |E|)$ time by Theorem~\ref{thm:complexity}.

\begin{algorithm}
	\caption{\GenpsolB($G,\p_1,\ldots,\p_k, \List$)}
	\begin{algorithmic}[1]
		\ForEach{$element \in \List$}
		\If {$element = (p,\ell)$}
				    \State $U = \{v \in V \mid \p_\ell(v) = p\}$ \label{line:begin}
            \State $U' = \{ v \in V \mid \p_\ell(v)  \mbox{ is a $0$-priority and } \p_\ell(v) > p \}$
            \State $W = \Win{1}{G}{\Buchi(U) \cap \CoBuchi(U')}$ \label{line:Safe2} \label{line:end}
			\If {$W \neq \emptyset$} 
		        \State $\{Z_0,Z_1\} =$ \GenpsolB($G \setminus W,\p_1,\ldots,\p_k,\List$)
			    \State return $\{Z_0,Z_1 \cup W\}$
		    \EndIf
		\Else \quad\quad\{We know that $element = (p_1,\ldots,p_k)$\}
		    \ForEach{$\ell$}
			    \State $U_\ell = \{v \in V \mid \p_\ell(v) = p_\ell \}$
	        \EndFor
            \State $U' = \{ v \in V \mid \exists \ell, \, \p_\ell(v) \mbox{ is a $1$-priority and } \p_\ell(v) > p_\ell \}$
            \State $W = \Win{0}{G}{\GenBuchi(U_1,\ldots,U_k) \cap \CoBuchi(U')}$ \label{line:Safe1}
		    \If {$W \neq \emptyset$}
		        \State $\{Z_0,Z_1\} =$ \GenpsolB($G \setminus W,\p_1,\ldots,\p_k,\List$)
			    \State return $\{Z_0 \cup W,Z_1\}$
		    \EndIf
		\EndIf
		\EndFor
		\State return $\{\emptyset,\emptyset\}$ 
	\end{algorithmic}
	\label{algo:GenpsolB}
\end{algorithm}



\section{Algorithms \psolC\ and \GenpsolC}  \label{sec:psolC}

In this section, we present a second polynomial time partial solver for parity games as proposed in~\cite{HuthKP16}. We then explain how to extend it for partially solving generalized parity games. We finally explain how the latter solver can be transformed into a more efficient (in practice) antichain-based algorithm. 

\paragraph{\bf Partial solver for parity games.~}

In Section~\ref{sec:psolB}, given a parity game, the basic action of Algorithm~\psolB\ is to compute the set of winning vertices for player~$i$ for the objective $\Buchi(U) \cap \CoBuchi(U')$ that imposes to visit infinitely often a given $i$-priority $p$ (set $U$) without visiting infinitely often any greater $(1-i)$-priority (set $U'$). In this section, this approach is extended to the \emph{set of all $i$-priorities} instead of only one $i$-priority $p$.

To this end, we consider the \emph{extended} game structure $G \times M$ with $M = [d]$  such that $m \in M$ records the \emph{maximum visited priority}. More precisely, the set of vertices of $G \times M$ is equal to $V \times M$ where $V_i \times M$ is the set of vertices controlled by player~$i$, and the set $\edge{M}$ of its edges is composed of all pairs $((v,m),(v', m'))$ such that $(v,v') \in E$ and $m' = \max\{m,\p(v)\}$. Clearly, with this construction, in $G$, player~$i$ can ensure to visit $v$ from $v_0$ such that the maximum visited priority ($v$ excluded) is an $i$-priority if and only if in $G \times M$, he can ensure to visit $(v,m)$ from $(v,\p(v_0))$ for some $i$-priority $m$. 

We want to compute a set $\JF{i}{G}{\p}$ such that if $v_0 \in \JF{i}{G}{\p}$, then player~$i$ is winning from $v_0$ for $\iPar{i}(\p)$. We proceed by computing the following sequence $F^{(i)} = \bigcap_{j \geq 0}F_j$. Initially $F_0 = V$ and for $j \geq 1$, $F_j$ is computed from $F_{j-1}$ as follows:
\begin{eqnarray}
T_j &=& \{ (v,m) \in V \times M \mid v \in F_{j-1} \mbox{ and $m$ is an $i$-priority} \} \label{eq:begin} \\
A_j &=& \PosAttr{i}{G \times M}{T_j} \label{eq:mid} \\
F_j &=& \{ v \in V \mid (v,\init(v)) \in A_j)\} \cap F_{j-1}.  \label{eq:end}
\end{eqnarray}
Intuitively, if $v_0 \in F_j$, then player~$i$ has a strategy to ensure to visit some vertex $v \in F_{j-1}$ such that the maximum visited priority along the consistent history $hv$ from $v_0$ to $v$ (priority of $v$ excluded) is some $i$-priority. We say that $h$ is a \emph{good episode}. Notice that $h$ is non empty since each $A_j$ is a \posAttr. 

We define $\JF{i}{G}{\p}$ as the fixpoint $F^{(i)}$. Therefore from a vertex $v_0$ in this fixpoint, player~$i$ can ensure a succession of good episodes in which the maximum visited priority is an $i$-priority. Thus he is winning from $v_0$ for $\iPar{i}(\p)$ as formalized in the next proposition. Notice that if $v_0$ belongs to the attractor for player~$i$ of $\JF{i}{G}{\p}$, then $v_0$ still belongs to $\Win{i}{G}{\iPar{i}(\p)}$ as the latter objective is prefix-independent.

\begin{proposition}[\cite{HuthKP16}] \label{prop:JFpsolC}
Let $(G,\Par(\p))$ be a parity game. Then for all $v_0 \in V$, if $v_0 \in \Attr{i}{G}{F^{(i)}}$, then $v_0 \in \Win{i}{G}{\iPar{i}(\p)}$. 
\end{proposition}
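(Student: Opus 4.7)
My plan is to build a winning strategy for player~$i$ by concatenating ``good episodes'' witnessed by the fixpoint $F^{(i)}$, and then conclude via a pigeonhole argument that the maximum priority seen infinitely often must be an $i$-priority.

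First I would note that the sequence $(F_j)_{j\geq 0}$ is monotonically decreasing (since $F_j \subseteq F_{j-1}$ by definition), hence stabilizes at some index $j_0$ with $F_{j_0} = F_{j_0+1} = F^{(i)}$, because $V$ is finite. Unpacking $F_{j_0+1} = F_{j_0}$ using equations~(\ref{eq:begin})--(\ref{eq:end}), this means that for every $v \in F^{(i)}$, the extended vertex $(v,\init(v))$ lies in $\PosAttr{i}{G\times M}{T_{j_0+1}}$ with $T_{j_0+1} = \{(w,m) \mid w \in F^{(i)} \text{ and } m \text{ is an $i$-priority}\}$. By the definition of the $M$-component (which records the maximum priority seen so far along the history, excluding the current vertex), this translates back into $G$ as follows: from any $v \in F^{(i)}$, player~$i$ has a strategy $\tau_v$ that forces the play to reach, in a positive (but finite) number of steps, some vertex $v' \in F^{(i)}$, via a history $h$ whose maximum priority (endpoint $v'$ excluded) is an $i$-priority. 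I call such an $h$ a \emph{good episode}.

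Next, I would construct the global strategy $\sigma_i$ for player~$i$ starting from any $v_0 \in F^{(i)}$ by concatenating these local strategies: play $\tau_{v_0}$ until some $v_1 \in F^{(i)}$ is reached, then switch to $\tau_{v_1}$ until reaching $v_2 \in F^{(i)}$, and so on. Any consistent play $\pi$ decomposes as a concatenation of infinitely many good episodes $h_1 h_2 h_3 \ldots$, each with $i$-priority maximum $q_n \in [d]$.

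The crux of the argument is then to show that $\max_{v \in \Occinf(\pi)} \alpha(v)$ is an $i$-priority. Let $p$ denote this maximum, so some vertex $v$ with $\alpha(v) = p$ occurs in infinitely many episodes $h_n$. In each such episode, the episode-maximum $q_n$ is an $i$-priority satisfying $q_n \geq p$. Since there are only finitely many priorities, some $i$-priority $q^* \geq p$ occurs as $q_n$ for infinitely many $n$. In each of those infinitely many episodes, some vertex of priority $q^*$ is visited, so by pigeonhole over the finite vertex set, some fixed vertex $v^*$ with $\alpha(v^*) = q^*$ is visited infinitely often. Hence $p \geq q^*$, yielding $p = q^*$, which is an $i$-priority. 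This establishes $v_0 \in \Win{i}{G}{\iPar{i}(\alpha)}$ for every $v_0 \in F^{(i)}$.

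Finally, for general $v_0 \in \Attr{i}{G}{F^{(i)}}$, player~$i$ first plays an attractor strategy to reach $F^{(i)}$ in finitely many steps, and then switches to $\sigma_i$. Since $\iPar{i}(\alpha)$ is prefix-independent, the resulting play is still winning, concluding the proof. The main subtlety I expect is the pigeonhole step linking the $i$-parity of each episode-maximum to the $i$-parity of the overall limit-supremum; the translation between positive attractors in $G \times M$ and good episodes in $G$ is routine once the bookkeeping of $m$ is understood. \qed
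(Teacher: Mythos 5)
Your proof is correct and follows essentially the same route as the paper, which does not prove this proposition itself (it cites~\cite{HuthKP16}) but sketches exactly this argument in the surrounding text: the fixpoint yields a succession of non-empty ``good episodes,'' and prefix-independence handles the attractor part. Your contribution beyond the paper's sketch is to make explicit the final pigeonhole step showing that the maximum priority seen infinitely often along a concatenation of good episodes is itself an $i$-priority, and that step is carried out correctly.
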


From Proposition~\ref{prop:JFpsolC}, we derive the polynomial time algorithm called~\psolC\ (see Algorithm~\ref{algo:psolC}). This algorithm is called psolC in~\cite{HuthKP16}. 

\begin{algorithm}
	\caption{\psolC($G,\p$)}
	\begin{algorithmic}[1]
		\ForEach{$i \in \{0,1\}$}
                         \State $W = \JF{i}{G}{\p}$
			\If {$W \neq \emptyset$} 
			    \State $X = \Attr{i}{G}{W}$
			    \State $\{Z_i,Z_{1-i}\} =$ \psolC($G \setminus X,\p$) 
			    \State return $\{Z_i \cup X,Z_{1-i}\}$
			\EndIf
		\EndFor
		\State return $\{\emptyset,\emptyset\}$ 
	\end{algorithmic}
	\label{algo:psolC}
\end{algorithm}


\paragraph{\bf Partial solver for generalized parity games.~}

The same approach can be applied to generalized parity games with some adaptations that depend on the player. Recall that the basic action of Algorithm~\GenpsolB\ in Section~\ref{sec:psolB} is \emph{(i)} to test whether player~$0$ can ensure to visit infinitely often a $0$-priority $p_\ell$ without visiting infinitely often a $1$-priority greater than $p_\ell$  on all dimensions $\ell$, or \emph{(ii)} to test whether player~$1$ can ensure to visit infinitely often a $1$-priority $p_\ell$ without visiting infinitely often a $0$-priority greater than $p_\ell$ on some dimension~$\ell$.

Therefore, for player~$1$, in place of lines~\ref{line:begin}-\ref{line:end} of Algorithm~\ref{algo:GenpsolB} we can now apply the \psolC\ approach by computing $W = \JF{1}{G}{\p_\ell}$.

For player~$0$, we have to deal with vectors of $0$-priorities. We thus consider the extended game structure $G \times M_1 \times \ldots \times M_k$ such that for all $\ell$, $M_\ell$ is equal to $[d_\ell]$, and where $m_\ell \in M_\ell$ records the maximum visited priority in \emph{dimension $\ell$}. Hence, the edges of this game structure have the form $((v,m_1, \ldots,m_k),(v', m'_1,\ldots,m'_k))$ such that $(v,v') \in E$ and $m'_\ell = \max\{m_\ell,\p_\ell(v)\}$ for all $\ell$.  
A good episode is now a history $h$ such that for all $\ell$, the maximum priority visited along $h$ in dimension $\ell$ is a $0$-priority. And the related set $\JF{0}{G}{\p_1,\ldots,\p_\ell}$ equal to $\bigcap_{j\geq 0} F_j$ is computed with Equations~(\ref{eq:begin}-\ref{eq:end}) modified as expected. 
We have the next counterpart of Proposition~\ref{prop:JFpsolC} with a similar proof.
 
\begin{proposition} \label{prop:GenJFpsolC}
Let $(G,\ConjPar(\p_1,\ldots,\p_k))$ be a generalized parity game. If $v_0 \in \Attr{0}{G}{F^{(i)}}$, then $v_0 \in \Win{0}{G}{\ConjPar(\p_1,\ldots,\p_k)}$.
\qed\end{proposition}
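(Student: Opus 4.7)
The plan is to mirror the proof of Proposition~\ref{prop:JFpsolC}, lifted to the multi-dimensional memory structure. Since $\ConjPar(\p_1,\ldots,\p_k)$ is prefix-independent, it suffices to show the claim for $v_0 \in F^{(0)}$: if $v_0$ lies only in $\Attr{0}{G}{F^{(0)}}$, player~$0$ first plays her memoryless attractor strategy to reach $F^{(0)}$ in finitely many steps and then switches to the strategy constructed below; the finite prefix does not affect membership in $\ConjPar$.

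Fix now $v_0 \in F^{(0)}$. By the fixpoint equation, $v_0 \in F_j$ for every $j \geq 1$, i.e.\ $(v_0,\p_1(v_0),\ldots,\p_k(v_0)) \in A_j$ in the extended game $G \times M_1 \times \cdots \times M_k$. Since $F^{(0)} \subseteq F_{j-1}$ for every $j$, player~$0$ has a strategy in the extended game that, from $(v_0,\p_1(v_0),\ldots,\p_k(v_0))$, forces a positive-length play reaching some state $(v_1, m_1, \ldots, m_k)$ with $v_1 \in F^{(0)}$ and every $m_\ell$ a $0$-priority. Projected onto $G$, this yields a finite history — a \emph{good episode} — from $v_0$ to $v_1 \in F^{(0)}$ such that, for every dimension $\ell$, the maximum priority visited along the episode (including $v_0$, excluding $v_1$) is even. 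Iterating this construction yields a (finite-memory) strategy $\sigma_0$ producing an infinite play $\pi = h_1 h_2 \cdots$ where each $h_i$ is a good episode ending at some $v_i \in F^{(0)}$.

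It remains to verify that every such $\pi$ belongs to $\ConjPar(\p_1,\ldots,\p_k)$. Fix a dimension $\ell$ and let $p^\ast = \max\{\p_\ell(v) \mid v \in \Occinf(\pi)\}$. I claim $p^\ast$ is even. By maximality of $p^\ast$, only finitely many episodes $h_i$ contain a vertex with dimension-$\ell$ priority strictly greater than $p^\ast$, so there is some $N$ such that for every $i \geq N$ the maximum dimension-$\ell$ priority along $h_i$ is at most $p^\ast$. On the other hand, $p^\ast$ itself appears in dimension $\ell$ in infinitely many episodes. Hence for infinitely many $i \geq N$ the maximum is \emph{exactly} $p^\ast$; since every good episode has an even maximum in every dimension, $p^\ast$ is even. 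As this holds for every $\ell$, we get $\pi \in \ConjPar(\p_1,\ldots,\p_k)$, so $\sigma_0$ is winning from $v_0$.

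The main delicacy I expect is the memory reset between episodes: the strategy $\sigma_0$ naturally lives on the extended game, and when a good episode ends at $v_i \in F^{(0)}$ the memory must be restarted at $(\p_1(v_i),\ldots,\p_k(v_i))$ before the next episode begins. This reset is precisely what the definition of $F_j$ enforces (the condition is stated at the re-entry vertex $v$ with initial memory $(\p_1(v),\ldots,\p_k(v))$), and it is exactly what makes the per-episode bound on the maximum priority transferable to the infinite play; without this reset, earlier high odd priorities could pollute later episodes and the argument above would collapse. Aside from this bookkeeping, no ingredient beyond a per-dimension replay of the argument in Proposition~\ref{prop:JFpsolC} is needed.
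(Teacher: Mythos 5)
Your proof is correct and follows exactly the route the paper intends: it is the good-episode argument behind Proposition~\ref{prop:JFpsolC}, replayed on the extended structure $G \times M_1 \times \cdots \times M_k$ with the per-dimension limsup argument and the memory reset at each re-entry into $F^{(0)}$, which is all the paper means by ``a similar proof.'' One small repair: membership in $A_j$ only guarantees reaching a state projecting into $F_{j-1}$, so the claim that the episode ends in $F^{(0)}$ should be justified by the stabilization of the decreasing sequence $(F_j)_{j\geq 0}$ at some finite index $j^*$ with $F_{j^*}=F^{(0)}$ (so that the target set is built from $F^{(0)}$ itself), rather than by the inclusion $F^{(0)} \subseteq F_{j-1}$, which points the wrong way.
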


For all the previous arguments, we derive Algorithm~\GenpsolC\ (see Algorithm~\ref{algo:GenpsolC}). It tries to find a partial solution first for player~$1$ (for some dimension~$\ell$) and then for player~$2$. By Theorem~\ref{thm:complexity}, its time complexity is in $O((\Pi_{\ell = 1}^{k} d_\ell) \cdot|V|^2\cdot|E|)$.

\begin{algorithm}
	\caption{\GenpsolC($G,\p_1,\ldots,\p_k$)}
	\begin{algorithmic}[1]
		\ForEach{$\ell \in \{1,\ldots,k\}$}
			\State $W = \JF{1}{G}{\p_\ell}$  \label{line:JF2}
			\If {$W \neq \emptyset$} 
			    \State $X = \Attr{1}{G}{W}$
		        \State $\{Z_0,Z_1\} =$  \GenpsolC($G \setminus X,\p_1,\ldots,\p_k$)
				\State return $\{Z_0,Z_1 \cup X\}$	        
		    \EndIf
		 \EndFor
	     \State $W = \JF{0}{G}{\p_1,\ldots,\p_k}$ \label{line:JF1}
		 \If {$W \neq \emptyset$} 
		        \State $X = \Attr{0}{G}{W}$
		        \State $\{Z_0,Z_1\} =$  \GenpsolC($G \setminus X,\p_1,\ldots,\p_k$)
			    \State return $\{Z_0 \cup X,Z_1\}$
		 \EndIf
		 \State return $\{\emptyset,\emptyset\}$ 
	\end{algorithmic}
	\label{algo:GenpsolC}
\end{algorithm}

%
%
%
%
%
%

\paragraph{\bf Partial solvers with antichains.~}

Algorithm~\GenpsolC\ has exponential time complexity due to the use of the extended structure $G \times M_1 \times \ldots \times M_k$ associated to a generalized parity game, and in particular to the computation of $\JF{0}{G}{\p_1,\ldots,\p_k}$. We here show that the vertices of this extended game can be partially ordered in a way to obtain an antichain-based algorithm for the computation of $\JF{0}{G}{\p_1,\ldots,\p_k}$. The antichains allow compact representation and efficient manipulation of partially ordered sets \cite{DoyenR10}.

Let us first recall the basic notions about antichains. Consider a \emph{partially ordered set} $(S,\preceq)$ where $S$ is a finite set and $\preceq \, \subseteq S \times S$ is a partial order on $S$. Given $s, s'\in S$, we write $s \sqcap s'$ their \emph{greatest lower bound} if it exists. 
A \emph{lower semilattice} is a partially ordered set such that this greater lower bound always exists for all $s,s' \in S$. Given two subsets $R, R' \subseteq S$, we denote by $R \sqcap R'$ the set $\{s \sqcap s' \mid s \in R, s' \in R'\}$. 

Given a lower semilattice $(S,\preceq)$, an \emph{antichain} $A$ is a subset of $S$ composed of pairwise incomparable elements with respect to $\preceq$. Given a subset $R \subseteq S$, we denote $\lceil R \rceil$ the set of its \emph{maximal} elements (which is thus an antichain). We say that $R$ is \emph{closed} if whenever $s \in R$ and $s' \preceq s$, then $s' \in R$. If $A$ is an antichain, we denote by $\downarrow \! A$ the closed set that it \emph{represents}, that is, $A = \lceil \downarrow \! A \rceil$. Hence when $R$ is closed, we have $R = \, \downarrow \! \lceil R \rceil$. The benefit of antichains is that they provide a \emph{compact representation} of closed sets. Moreover some operations on those closed sets can be done at the level of their antichains as indicated in the next proposition. 

\begin{proposition}[\cite{DoyenR10}] \label{prop:antichain}
Let $(S,\preceq)$ be a lower semilattice, and $R, R' \subseteq S$ be two closed sets represented by their antichains $A = \lceil R \rceil, A' = \lceil R' \rceil$. Then
\begin{itemize}
\item for all $s \in S$, $s \in R$ if and only if there exists $s' \in A$ such that $s \preceq s'$,
\item $R \cup R'$ is closed, and $\lceil R \cup R' \rceil = \lceil A \cup A' \rceil$,
\item $R \cap R'$ is closed, and $\lceil R \cap R' \rceil = \lceil A \sqcap A' \rceil$.
\end{itemize}
\end{proposition}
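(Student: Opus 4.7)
The plan is to establish the three bullets in order, each leveraging the previous ones, and to keep all arguments at the level of closed sets and their maximal elements (using that $S$ is finite, so every element of a closed set sits below some maximal element).

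For the first bullet, I would prove both directions separately. The ($\Leftarrow$) direction is immediate: if $s \preceq s'$ with $s' \in A = \lceil R \rceil \subseteq R$, then closedness of $R$ forces $s \in R$. For ($\Rightarrow$), I would use finiteness of $S$ to pick a $\preceq$-maximal element $s'$ of $R$ above $s$; such an $s'$ cannot be dominated in $R$, so $s' \in \lceil R \rceil = A$. As a useful corollary I would record that $R = \,\downarrow\! A$, and more generally that for any subset $T \subseteq S$ with only finitely many elements, $\lceil \downarrow\! T \rceil = \lceil T \rceil$, since maximal elements of the downward closure are already present in $T$.

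For the union, closedness is trivial: if $t \in R \cup R'$ and $u \preceq t$, then $t$ lies in $R$ or $R'$ and closedness of that set puts $u$ there too. For the antichain identity I would show $R \cup R' = \,\downarrow\!(A \cup A')$. The inclusion $\supseteq$ is clear since $A \subseteq R$, $A' \subseteq R'$, and both $R$ and $R'$ are closed. For $\subseteq$, take $s \in R \cup R'$, apply the first bullet to whichever side contains $s$ to find a dominating element in $A$ or $A'$. Passing to maximal elements on both sides and using the corollary above yields $\lceil R \cup R' \rceil = \lceil A \cup A' \rceil$.

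The intersection is the point where the lower semilattice hypothesis is essential, and this is where I expect the main (small) obstacle. Closedness of $R \cap R'$ is again routine. To obtain $\lceil R \cap R' \rceil = \lceil A \sqcap A' \rceil$, I would show $R \cap R' = \,\downarrow\!(A \sqcap A')$. For $\supseteq$: any element of $A \sqcap A'$ has the form $s_1 \sqcap s_2$ with $s_1 \in A \subseteq R$ and $s_2 \in A' \subseteq R'$; since $s_1 \sqcap s_2 \preceq s_1$ and $s_1 \sqcap s_2 \preceq s_2$, closedness places it in $R \cap R'$, and then the whole downset is in $R \cap R'$ by closedness. For $\subseteq$: given $s \in R \cap R'$, the first bullet supplies $s_1 \in A$ and $s_2 \in A'$ with $s \preceq s_1$ and $s \preceq s_2$; hence $s$ is a common lower bound of $s_1, s_2$, so by the universal property of the greatest lower bound $s \preceq s_1 \sqcap s_2 \in A \sqcap A'$. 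Taking maximal elements on both sides of $R \cap R' = \,\downarrow\!(A \sqcap A')$ and using the corollary yields the desired equality. \qed
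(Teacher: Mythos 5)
Your proof is correct. Note that the paper itself offers no proof of this proposition: it is imported verbatim from the cited reference \cite{DoyenR10}, so there is no in-paper argument to compare against. Your write-up is the standard one: the first bullet (via finiteness of $S$, every element of a closed set lies below a maximal one), the auxiliary identity $\lceil \downarrow\! T\rceil = \lceil T\rceil$, and then the reductions $R\cup R' = \,\downarrow\!(A\cup A')$ and $R\cap R' = \,\downarrow\!(A\sqcap A')$, the latter using exactly the universal property of the greatest lower bound where the lower-semilattice hypothesis enters. All steps check out, including the two spots that need care: you correctly use finiteness for the existence of dominating maximal elements, and you correctly avoid claiming that $A\sqcap A'$ is itself an antichain (it need not be), instead passing to its maximal elements at the end.
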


For \emph{simplicity}, we focus on the extended structure $G \times M$ associated to a parity game and begin to explain an antichain-based algorithm for the computation of set $\JF{0}{G}{\p}$ (this algorithm is inspired from~\cite{FiliotJR13}). We will explain later what are the needed adaptations to compute $\JF{0}{G}{\p_1,\ldots,\p_k}$ in generalized parity games. We equip $V \times M$ with the following partial order:


\begin{definition} \label{def:order}
We define the strict partial order $\prec$ on $V \times M$ such that $(v',m') \prec (v,m)$ if and only if $v = v'$ and:
\begin{enumerate}
\item either $m, m'$ are even and $m' > m$,
\item or $m, m'$ are odd  and $m' < m$,
\item or $m$ is odd and $m'$ is even.
\end{enumerate}
We define $(v',m') \preceq (v,m)$ if either $(v',m') = (v,m)$ or $(v',m') \prec (v,m)$.
\end{definition}

For instance, if $[d] = [4]$, then $(v,4) \prec (v,2) \prec (v,0) \prec (v,1) \prec (v,3)$. With this definition, two elements $(v,m),(v',m')$ are incomparable as soon as $v \neq v'$. It follows that in Proposition~\ref{prop:antichain}, if $R \subseteq \{v\} \times M$ and  $R' \subseteq \{v'\} \times M$ with $v \neq v'$, then the union $A \cup A'$ of their antichains is already an antichain, that is, $A \cup A' = \lceil A \cup A' \rceil$. Moreover, the operator $\sqcap$ useful in Proposition~\ref{prop:antichain} is defined as indicated in the next lemma.

\begin{lemma}
Let $(v,m),(v,m') \in V \times M$. Then $(v,m) \sqcap (v,m')$ is equal to:
\begin{itemize}
    \item $(v,\max\{m,m'\})$ if $m,m'$ are even,
    \item $(v,\min\{m,m'\})$ if $m,m'$ are odd,
    \item $(v,m)$ if $m$ is even and $m'$ is odd,
    \item $(v,m')$ if $m$ is odd and $m'$ is even.
\end{itemize}
\qed\end{lemma}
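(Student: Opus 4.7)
The plan is to reduce the computation of $\sqcap$ to a minimum in a total order. First I would observe that the order $\preceq$ from Definition~\ref{def:order} is non-trivial only between elements sharing the same first coordinate: if $v \neq v'$ the elements are incomparable and in particular no greatest lower bound exists unless we restrict attention to the slice $\{v\} \times M$. All four cases of the lemma concern two elements with identical first coordinate, so it suffices to analyse $\preceq$ on $\{v\} \times M$.

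Next I would verify that the restriction of $\preceq$ to $\{v\} \times M$ is a total order. Given two distinct elements $(v,m)$ and $(v,m')$, the parity of $m$ and $m'$ falls into exactly one of three cases (both even, both odd, or mixed), and in each case exactly one of conditions~(1)--(3) of Definition~\ref{def:order} applies (possibly after swapping the two elements). This gives the chain illustrated by the example after the definition: large evens at the bottom, decreasing through the evens, then crossing over to the small odds, and increasing through the odds. Since every pair of elements in a chain admits a greatest lower bound, namely their minimum with respect to the chain order, the lemma reduces to identifying this minimum in each parity case.

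Finally, I would handle the four cases one by one. If $m$ and $m'$ are both even, condition~(1) says that $(v,m') \prec (v,m)$ iff $m' > m$, so the $\preceq$-smaller of the two is the one with the larger numerical value, yielding $(v,\max\{m,m'\})$. If $m$ and $m'$ are both odd, condition~(2) makes the $\preceq$-smaller element the one with the smaller numerical value, giving $(v,\min\{m,m'\})$. If $m$ is even and $m'$ is odd, condition~(3) applied with $m'$ and $m$ swapped yields $(v,m) \prec (v,m')$, so the meet is $(v,m)$; the symmetric argument handles the last case.

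The only real point requiring care is the orientation of the order on even priorities, where the $\preceq$-smaller element is the numerically \emph{larger} one; this is the only place where one could easily write the wrong formula. Once this subtlety is tracked, the proof amounts to a direct case analysis on the parities of $m$ and $m'$ and an application of the defining conditions of $\prec$.
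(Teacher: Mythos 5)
Your proof is correct: the paper states this lemma without proof (it is asserted as immediate from Definition~\ref{def:order}), and your argument---restricting to the slice $\{v\}\times M$, observing that $\preceq$ is total there (the chain $(v,d)\prec\cdots\prec(v,0)\prec(v,1)\prec\cdots$), and reading off the $\preceq$-minimum in each parity case---is exactly the intended verification. You also correctly flag the one genuine pitfall, namely that on even priorities the $\preceq$-smaller element is the numerically larger one.
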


The computation of $\JF{0}{G}{\p}$ is based on Equations~(\ref{eq:begin}-\ref{eq:end}). Equation~(\ref{eq:mid}) involves the computation of \posAttr s over $G \times M$ thanks to Equations~(\ref{eq:Cpre}-\ref{eq:Attr}). We are going to show that $\JF{0}{G}{\p}$ is a closed set as well as all the intermediate sets to compute it. It will follow that all the computations can be performed directly on the antichains that represent those sets.

We already know from Proposition~\ref{prop:antichain} that the family of closed sets is stable under union and intersection and how the computations can be done at the level of their antichains. So we now focus on the basic operation $\CPre{0}{G \times M}{U}$ form some $U \subseteq V \times M$. We need to introduce the following functions $\up$ and $\down$. Given $(v,m) \in V \times M$, we define $\up(m,\p(v)) = \max \{m,\p(v)\}$. Recall that such an update from $m$ to $m' = \up(m,\p(v))$ is used in the edges $((v,m),(v',m'))$ of $G \times M$. Function $\down$ is defined hereafter. It is the inverse reasoning of function $\up$: given $(v',m') \in V \times M$ and $(v,v') \in E$, the value $\down(m',\alpha(v))$ yields the maximal value $m$ such that $(v',\up(m,\p(v))) \preceq (v',m')$.

\begin{definition} \label{def:down}
Given $(v',m') \in V \times M$ and $p = \p(v)$, we define $m = \down(m',p)$ as follows:


\begin{enumerate}
    \item Case $p$ even:  
    \begin{itemize}
        \item if $p < m'$ then $m = m'$,
        \item else $m = \max \{p-1,0\}$,
    \end{itemize}
    \item Case $p$ odd: 
    \begin{itemize}
        \item if $p \leq m'$ then $m = m'$, 
        \item else $m = p+1$ except if $p = d$ in which case $\down(m',p)$ is not defined.
    \end{itemize}
\end{enumerate}
\end{definition}





The next proposition states that if the set $U \subseteq V \times M$ is closed, then the set $\CPre{0}{G \times M}{U}$ is also closed.
It also indicates how to design an antichain-based algorithm for computing $\CPre{0}{G \times M}{U}$. For convenience, we recall Equation~(\ref{eq:Cpre}): $\CPre{0}{G \times M}{U} = C_0 \cup C_1$ with
\begin{eqnarray}
C_0 &=& \{(v,m) \in V_0 \times M  \mid \exists ((v,m),(v',m')) \in \edge{M} \mbox{ with } (v',m') \in U \} \label{eq:CpreU0} \\
C_1 &=& \{(v,m) \in V_{1} \times M \mid \forall ((v,m),(v',m')) \in \edge{M}: (v',m') \in U \}. \label{eq:CpreU1}
\end{eqnarray}

\begin{proposition} \label{prop:CPreClosed}
If $U \subseteq V \times M$ is a closed set, then $\CPre{0}{G \times M}{U}$ is closed. Let $A = \lceil U \rceil$ be the antichain representing $U$. Then $\lceil \CPre{0}{G \times M}{U} \rceil = \lceil B_0 \cup B_1 \rceil$ where $B_0, B_1$ are the following antichains:
\begin{eqnarray*}
    B_0  &=& \bigcup_{v \in V_0} \Big\lceil \{ (v,\down(m',\p(v))) \mid (v,v') \in E, (v',m') \in A \} \Big\rceil, \\
    B_1 &=& \bigcup_{v \in V_1} \Big\lceil \bigsqcap_{(v,v')\in E} \big\lceil \{ (v,\down(m',\p(v))) \mid (v',m') \in A \} \big\rceil \Big\rceil. 
\end{eqnarray*}
\end{proposition}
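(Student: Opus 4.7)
The plan is to exploit the special structure of the partial order $\preceq$, namely that only elements with the same $V$-component are comparable, and that within each fiber $\{v\} \times M$ the restriction of $\preceq$ is a \emph{total} order. Consequently, any subset $R$ of $V \times M$ decomposes as a disjoint union over $v \in V$ of subsets of $\{v\} \times M$, and $R$ is closed iff each such slice is downward-closed in the corresponding chain. Since $\CPre{0}{G \times M}{U} = C_0 \cup C_1$ is defined by Equations~(\ref{eq:CpreU0})--(\ref{eq:CpreU1}) through local conditions on the outgoing edges of each vertex, I would analyse $C_0$ and $C_1$ separately, vertex by vertex.

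The key technical lemma is a Galois-style characterisation of $\down$: for every $v \in V$, setting $p = \p(v)$, and for every $m' \in M$, the value $\down(m',p)$ is the $\preceq$-maximum of $\{ m \in M \mid \up(m,p) \preceq m' \}$, with the convention that when this set is empty (which only happens when $p = d$ is odd and $m' \neq d$), the symbol $\down(m',p)$ is undefined and simply contributes nothing to the antichains. Proving this amounts to a short case analysis mirroring Definition~\ref{def:down}: in each of the four cases (even/odd $p$, versus $p < m'$ or $p \geq m'$ numerically) one checks that the proposed value of $\down(m',p)$ is indeed the largest $m$, in the chain $\preceq$ restricted to $\{v\} \times M$, whose lift $\up(m,p) = \max\{m,p\}$ stays below $m'$. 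I expect this step to be the main technical obstacle, because $\up$ is defined using the natural order on $\N$ whereas $\preceq$ inverts even values and places all odds above all evens; the two orders interact nontrivially in each case. Once established, the lemma immediately yields $m \preceq \down(m',p)$ iff $\up(m,p) \preceq m'$.

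With the Galois lemma in hand, the remainder is routine. For $v \in V_0$, Equation~(\ref{eq:CpreU0}) together with $U = \,\downarrow\! A$ gives: $(v,m) \in C_0$ iff there exist $(v,v') \in E$ and $(v',m^*) \in A$ with $\up(m,\p(v)) \preceq m^*$, iff $m \preceq \down(m^*,\p(v))$ for some such pair. Hence the slice $C_0 \cap (\{v\} \times M)$ is a finite union of principal downsets in the chain $\{v\} \times M$, therefore closed, with maximal elements $\lceil \{ (v,\down(m^*,\p(v))) \mid (v,v') \in E, (v',m^*) \in A \} \rceil$; taking the union over $v \in V_0$ produces exactly $B_0$. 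For $v \in V_1$, the existential over successors in Equation~(\ref{eq:CpreU0}) is replaced by a universal in Equation~(\ref{eq:CpreU1}), so the slice $C_1 \cap (\{v\} \times M)$ becomes the \emph{intersection}, over all $(v,v') \in E$, of the closed sets $\{ (v,m) \mid \exists (v',m^*) \in A,\; m \preceq \down(m^*,\p(v)) \}$. By Proposition~\ref{prop:antichain}, each such intersection is closed and its antichain is obtained by applying $\sqcap$ to the antichains of its factors, yielding $B_1$ after union over $v \in V_1$. Finally $\CPre{0}{G \times M}{U} = C_0 \cup C_1$ is a union of two closed sets, hence closed, and Proposition~\ref{prop:antichain} identifies its antichain as $\lceil B_0 \cup B_1 \rceil$, completing the argument.
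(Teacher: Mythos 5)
Your proposal is correct and follows essentially the same route as the paper: your Galois-style characterisation of $\down$ (that $m \preceq \down(m',p)$ if and only if $\up(m,p) \preceq m'$, with the undefined case handled by emptiness) is precisely the content of the paper's Lemma~\ref{lem:down} (monotonicity of $\up$ and $\down$ together with the two unit/counit inequalities $(v,m) \preceq (v,\down(\up(m,p),p))$ and $(v',\up(\down(m',p),p)) \preceq (v',m')$), and your derivation of $B_0$ and $B_1$ from it via Proposition~\ref{prop:antichain} matches the paper's argument fiber by fiber. The only cosmetic differences are that you package these four facts as a single adjunction on each chain $\{v\}\times M$ and obtain closedness of $\CPre{0}{G \times M}{U}$ from the slice decomposition, whereas the paper proves the four items separately and derives closedness directly from the monotonicity of $\up$.
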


Let us come back to the computation of $\JF{0}{G}{\p}$, which depends on Equations~(\ref{eq:begin}-\ref{eq:end}) and Equations~(\ref{eq:Cpre}-\ref{eq:Attr}) for the \posAttr s of Equation~(\ref{eq:mid}). By the next lemma, all sets $T_j$ of Equation~(\ref{eq:begin}) are closed, and thus, by Propositions~\ref{prop:antichain} and~\ref{prop:CPreClosed}, we get an antichain-based algorithm for computing $\JF{0}{G}{\p}$ as announced. 

\begin{lemma} \label{lem:antichain}
Each $T_j$ of Equation~(\ref{eq:begin}) is a closed set that is represented by the antichain $\lceil T_j \rceil = \{(v,0) \mid v \in F_j\}$.
\qed \end{lemma}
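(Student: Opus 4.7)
The plan is to verify closure directly from Definition~\ref{def:order} and then to read off the antichain of maximal elements by inspection; since we are working with $\JF{0}{G}{\p}$, throughout an $i$-priority means an even integer.

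To see that $T_j$ is closed, pick any $(v,m) \in T_j$ (so $v \in F_{j-1}$ and $m$ is even) and let $(v',m') \prec (v,m)$. By Definition~\ref{def:order}, any two comparable pairs share their first coordinate, hence $v' = v$ and in particular $v' \in F_{j-1}$. The three clauses of the strict order that can yield $(v,m') \prec (v,m)$ are (1) $m,m'$ both even with $m' > m$, (2) $m,m'$ both odd with $m' < m$, and (3) $m$ odd and $m'$ even. Since $m$ is even, clauses (2) and (3) cannot apply, so only clause (1) applies and $m'$ is again even. Hence $(v',m') \in T_j$, proving closure.

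For the antichain, fix $v \in F_{j-1}$ and look at $T_j \cap (\{v\} \times M)$, which consists of all $(v,m)$ with $m$ even. Reading Definition~\ref{def:order} in the reverse direction shows that, among pairs $(v,m'')$ with $m''$ even, $(v,m) \prec (v,m'')$ holds iff $m'' < m$ (clause (1) with the roles of $m,m'$ swapped). So an even $m$ is maximal within the even values iff $m = 0$, and no element of $T_j$ can sit strictly above $(v,0)$ because $T_j$ contains only pairs with even second coordinate. Therefore $(v,0)$ is the unique maximum of $T_j \cap (\{v\} \times M)$, and since Definition~\ref{def:order} makes pairs with distinct first coordinates incomparable, the set $\{(v,0) \mid v \in F_{j-1}\}$ is already an antichain and coincides with $\lceil T_j \rceil$.

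The one place where care is warranted is that on even values the order reverses the natural numerical order (smaller is higher), so one must not confuse direction when identifying maxima; otherwise the argument is a short case analysis and no real obstacle is expected.
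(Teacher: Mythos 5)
Your proof is correct, and it is the intended argument: the paper states Lemma~\ref{lem:antichain} without any proof, and the direct verification you give (closure follows because cases~2 and~3 of Definition~\ref{def:order} need $m$ odd, and the order on even values is the reverse of the numerical order, making $(v,0)$ the unique maximum of each fiber $T_j \cap (\{v\}\times M)$) is exactly what is needed. One small remark: Equation~(\ref{eq:begin}) defines $T_j$ via $v \in F_{j-1}$, so the antichain you correctly derive is $\{(v,0) \mid v \in F_{j-1}\}$; the lemma's statement writes $F_j$ instead, which is an off-by-one slip in the indexing of the statement rather than a gap in your argument (the two coincide at the fixpoint).
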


It remains to prove Proposition~\ref{prop:antichain}. We first prove the next properties of functions $\up$ and $\down$.

\begin{lemma} \label{lem:down}
For all $(v,m), (v,m_1), (v,m_2) \in V \times M$ with $(v,v') \in E$ and $p = \p(v)$, 
\begin{enumerate}
    \item if $(v,m_1) \preceq (v,m_2)$, then $(v',\up(m_1,p)) \preceq (v',\up(m_2,p))$,
    \item $(v,m) \preceq (v,\down(\up(m,p),p))$.
\end{enumerate}
For all $(v,m'), (v',m'_1), (v',m'_2) \in V \times M$ with $(v,v') \in E$ and $p = \p(v')$, such that $\down(m',p)$, $\down(m'_1,p)$, $\down(m'_2,p)$ are all defined,
\begin{enumerate}
\setcounter{enumi}{2}
    \item if $(v',m'_1) \preceq (v',m'_2)$, then $(v,\down(m'_1,p)) \preceq (v,\down(m'_2,p))$,
    \item $(v',\up(\down(m',p),p)) \preceq (v',m')$.
\end{enumerate}
\end{lemma}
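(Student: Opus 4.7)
The plan is to reduce each of the four claims to an elementary arithmetic inequality after a case analysis on the parity of the priority $p$ and of the maxima involved. A useful preliminary remark is that, for any fixed vertex $v$, the restriction of $\preceq$ to $\{v\}\times M$ is a \emph{total} order in which the even values appear first in decreasing integer order, then the odd values in increasing integer order, as illustrated by $(v,4)\prec(v,2)\prec(v,0)\prec(v,1)\prec(v,3)$. In particular, once the parities of $a$ and $b$ are known, the comparison $(v,a)\preceq(v,b)$ reduces to a single integer inequality, and this is the only tool the proof needs.

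For item~1, since $\up(m,p)=\max(m,p)$, the claim is the monotonicity of $\max(\cdot,p)$ read through $\preceq$; I would split on the parities of $p, m_1, m_2$. The only delicate subcase is when $p$ has parity opposite to both $m_1, m_2$ and lies strictly between them, since then $\up(m_1,p)$ and $\up(m_2,p)$ have different parities, but case~3 of Definition~\ref{def:order} makes the even-valued image automatically $\preceq$-smaller than the odd-valued one, which is exactly the required direction. For item~2, if $p$ is even and $m>p$ then $\up(m,p)=m$ and $\down(m,p)=m$, so the claim is trivial; otherwise $\up(m,p)=p$ and $\down(p,p)=\max(p-1,0)$, a value whose parity is opposite to $p$ (or equal to $0$), and a short split on the parity of $m$ finishes the case. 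If instead $p$ is odd, then $\up(m,p)\geq p$ forces $\down(\up(m,p),p)=\up(m,p)=\max(m,p)$, so $(v,m)\preceq(v,\max(m,p))$ is immediate.

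For items~3 and~4 I would carry out the dual analysis. The map $\down(\cdot,p)$ is the identity on the $\preceq$-small values of $m'$ and is constant elsewhere, the constant being $\max(p-1,0)$ for $p$ even and $p+1$ for $p$ odd. For item~3, given $m'_1\preceq m'_2$, I would distinguish whether both $m'_i$ lie in the identity region, both in the constant region, or one in each; the first two cases are immediate, and the mixed case relies on the fact that $p\pm 1$ has parity opposite to $p$, which forces the constant image to be $\preceq$-comparable to the identity image in the correct direction. For item~4, composing $\down$ with $\up$ returns either $m'$ itself or one of $p$ or $p+1$; in each case this is an integer of known parity whose $\preceq$-position relative to $m'$ is checked in a single line, using the fact that $p\geq m'$ (resp.\ $p>m'$) on the branch where the constant is taken.

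The main obstacle is purely organisational: a naive split on all parities and inequalities generates a fairly large number of subcases. To keep the proof readable I would structure each item as a compact table indexed by the parities of its inputs, so that every entry collapses to a one-line arithmetic verification, and would explicitly flag the single ``parity-mismatch'' subcase in items~1 and~3 where the image parities differ, since that is the only place where case~3 of Definition~\ref{def:order} is actually used.
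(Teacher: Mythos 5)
Your proposal is correct and follows essentially the same route as the paper's proof: an exhaustive case analysis on the parities of $p$ and of the arguments, reducing each $\preceq$-comparison to a single integer inequality via the three cases of Definition~\ref{def:order} (the paper organizes the cases by which of $m_1,m_2$ is larger rather than by your identity/constant regions, but the content is identical). One small inaccuracy to fix when writing it out: for $p$ even the identity region of $\down(\cdot,p)$ is $\{m'>p\}$, which in the chain $\preceq$ is the union of an \emph{initial and a final} segment (the constant region is the middle), so the ``mixed'' case of item~3 really splits into two directional subcases, and similarly the parity-mismatch situation in item~1 also arises when $p$ shares the parity of one endpoint --- both of which your parity-plus-magnitude argument does handle.
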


\begin{proof}
We begin with the proof of Item~1. If $(v,m_1) = (v,m_2)$, then clearly $(v',\up(m_1,p)) = (v',\up(m_2,p))$. Hence we suppose that $(v,m_1) \prec (v,m_2)$.

\begin{enumerate}
\item Suppose that $m_1 < m_2$. According to Definition~\ref{def:order}, this happens when $m_1, m_2$ are odd (Case 2. of Definition~\ref{def:order}) or when $m_1$ is even and $m_2$ is odd (Case 3. of Definition~\ref{def:order}), in summary when \emph{$m_2$ is odd}. If $p < m_1$, then $\up(m_1,p) = m_1$, $\up(m_2,p) = m_2$, and thus $(v',\up(m_1,p)) \prec (v',\up(m_2,p))$. If $m_1 \leq p < m_2$, then $\up(m_1,p) = p$, $\up(m_2,p) = m_2$. It follows that if $p$ is even then $(v',\up(m_1,p)) \prec (v',\up(m_2,p))$ by Case 3. of Definition~\ref{def:order}, and if $p$ is odd then then $(v',\up(m_1,p)) \prec (v',\up(m_2,p))$ by Case 2. of Definition~\ref{def:order}. If $m_2 \leq p$, then $\up(m_1,p) = \up(m_2,p) = p$ and thus $(v',\up(m_1,p)) \preceq (v',\up(m_2,p))$.

\item Suppose that $m_2 < m_1$. According to Definition~\ref{def:order}, this happens when $m_1, m_2$ are even (Case 1. of Definition~\ref{def:order}) or when $m_1$ is even and $m_2$ is odd (Case 3. of Definition~\ref{def:order}), in summary when \emph{$m_1$ is even}. If $p < m_2$, then $\up(m_1,p) = m_1$, $\up(m_2,p) = m_2$, and thus $(v',\up(m_1,p)) \prec (v',\up(m_2,p))$. If $m_2 \leq p < m_1$, then $\up(m_2,p) = p$, $\up(m_1,p) = m_1$. It follows that if $p$ is even then $(v',\up(m_1,p)) \prec (v',\up(m_2,p))$ by Case 1. of Definition~\ref{def:order}, and if $p$ is odd then $(v',\up(m_1,p)) \prec (v',\up(m_2,p))$ by Case 3. of Definition~\ref{def:order}. If $m_1 \leq p$, then $\up(m_2,p) = \up(m_1,p) = p$ and thus $(v',\up(m_1,p)) \preceq (v',\up(m_2,p))$.
\end{enumerate}
%
%
Let us proceed to the proof of Item~2.

\begin{enumerate}
\item Suppose that $p$ is even. If $p<m$, then $\up(m,p) = m$ and thus $\down(\up(m,p),p) = m$ (Case 1. of Definition~\ref{def:down}). It follows that $(v,m) \preceq (v,\down(\up(m,p),p)$. If $p\geq m$ with $p \neq 0$, then $\up(m,p) = p$ and thus $\down(\up(m,p),p)= p-1$. If $m$ is even, then $(v,m) \preceq (v,\down(\up(m,p),p)$ ($p-1$ is odd, Case 3. of Definition~\ref{def:order}); and if $m$ is odd, then $p-1 \geq m$ (since $m,p$ have different parities) and $(v,m) \preceq (v,\down(\up(m,p),p)$ ($p-1$ is odd, Case 2. of Definition~\ref{def:order}). If $p\geq m$ with $p = 0$, then $m=0$, $\up(m,p) = 0$ and thus $\down(\up(m,p),p) = 0$ (Case 1. of Definition~\ref{def:down}). It follows that  $(v,m) \preceq (v,\down(\up(m,p),p)$.

\item Suppose that $p$ is odd. If $p\leq m$, then $\up(m,p) = m$ and thus $\down(\up(m,p),p) = m$ (Case 2. of Definition~\ref{def:down}). It follows that $(v,m) \preceq (v,\down(\up(m,p),p)$. If $p > m$, then $\up(m,p) = p$ and thus $\down(\up(m,p),p) = p$ (Case 2. of Definition~\ref{def:down}). Hence $(v,m) \preceq (v,\down(\up(m,p),p)$ for both possible parities of $m$ (Cases 2 and 3 of Definition~\ref{def:order}). 
\end{enumerate}
We now treat Item~3. By hypothesis, $\down(m'_1,p)$ and $\down(m'_2,p)$ are supposed to be defined. If $(v,m'_1) = (v,m'_2)$, then clearly $(v,\down(m'_1,p)) = (v,\down(m'_2,p))$. Hence we suppose that $(v,m'_1) \prec (v,m'_2)$. 

\begin{enumerate}
\item Suppose that $m'_1 < m'_2$. According to Definition~\ref{def:order}, this happens when $m'_1, m'_2$ are odd (Case 2. of Definition~\ref{def:order}) or when $m'_1$ is even and $m'_2$ is odd (Case 3. of Definition~\ref{def:order}), in summary when \emph{$m'_2$ is odd}. 

Suppose first that $p$ is even (Case 1. of Definition~\ref{def:down}). If $p < m'_1$, then $\down(m'_1,p) = m'_1$, $\down(m'_2,p) = m'_2$, and thus $(v,\down(m'_1,p)) \preceq (v,\down(m'_2,p))$. If $m'_1 \leq p < m'_2$ with $p \neq 0$, then $\down(m'_1,p) = p-1$, $\down(m'_2,p) = m'_2$. It follows that $(v,\down(m'_1,p)) \preceq (v,\down(m'_2,p))$ ($p-1, m'_2$ are odd, Case 2. of Definition~\ref{def:order}). If $m'_1 \leq p < m'_2$ with $p = 0$, then $\down(m'_1,p) = 0$, $\down(m'_2,p) = m'_2$. It follows that $(v,\down(m'_1,p)) \preceq (v,\down(m'_2,p))$ (Case 3. of Definition~\ref{def:order}). If $m'_2 \leq p$, then $\down(m'_1,p) = \down(m'_2,p) = p-1$ and thus $(v,\down(m'_1,p)) \preceq (v,\down(m'_2,p))$. 

Suppose now that $p$ is odd (Case 2. of Definition~\ref{def:down}). If $p \leq m'_1$, then $\down(m'_1,p) = m'_1$, $\down(m'_2,p) = m'_2$, and thus $(v,\down(m'_1,p)) \preceq (v,\down(m'_2,p))$. If $m'_1 < p \leq m'_2$, then $\down(m'_1,p) = p+1$, $\down(m'_2,p) = m'_2$. It follows that $(v,\down(m'_1,p)) \preceq (v,\down(m'_2,p))$ ($p+1$ is even, $m'_2$ is odd, Case 3. of Definition~\ref{def:order}). If $m'_2 < p$, then $\down(m'_1,p) = \down(m'_2,p) = p+1$ and thus $(v,\down(m'_1,p)) \preceq (v,\down(m'_2,p))$. 

\item Suppose that $m'_2 < m'_1$. According to Definition~\ref{def:order}, this happens when $m'_1, m'_2$ are even (Case 1. of Definition~\ref{def:order}) or when $m'_1$ is even and $m'_2$ is odd (Case 3. of Definition~\ref{def:order}), in summary when \emph{$m'_1$ is even}.

Suppose first that $p$ is even (Case 1. of Definition~\ref{def:down}). If $p < m'_2$, then $\down(m'_1,p) = m'_1$, $\down(m'_2,p) = m'_2$, and thus $(v,\down(m'_1,p)) \preceq (v,\down(m'_2,p))$. If $m'_2 \leq p < m'_1$ with $p \neq 0$, then $\down(m'_2,p) = p-1$, $\down(m'_1,p) = m'_1$. It follows that $(v,\down(m'_1,p)) \preceq (v,\down(m'_2,p))$ ($p-1$ is odd, $m'_1$ is even, Case 3. of Definition~\ref{def:order}). If $m'_1 \leq p < m'_2$ with $p = 0$, then $\down(m'_2,p) = 0$, $\down(m'_1,p) = m'_1$. It follows that $(v,\down(m'_1,p)) \preceq (v,\down(m'_2,p))$ (Case 1. of Definition~\ref{def:order}). If $m'_1 \leq p$, then $\down(m'_2,p) = \down(m'_1,p) = p-1$ and thus $(v,\down(m'_1,p)) \preceq (v,\down(m'_2,p))$. 

Suppose now that $p$ is odd (Case 2. of Definition~\ref{def:down}). If $p \leq m'_2$, then $\down(m'_1,p) = m'_1$, $\down(m'_2,p) = m'_2$, and thus $(v,\down(m'_1,p)) \preceq (v,\down(m'_2,p))$. If $m'_2 < p \leq m'_1$, then notice that $p+1 \leq m'_1$ as $p, m'_1$ have opposite parities. Moreover $\down(m'_2,p) = p+1$, $\down(m'_1,p) = m'_1$. It follows that $(v,\down(m'_1,p)) \preceq (v,\down(m'_2,p))$ ($p+1, m'_1$ are even, Case 1. of Definition~\ref{def:order}). If $m'_2 < p$, then $\down(m'_2,p) = \down(m'_1,p) = p+1$ and thus $(v,\down(m'_1,p)) \preceq (v,\down(m'_2,p))$.
\end{enumerate}
It remains to consider the last item. By hypothesis, $\down(m,p)$ is supposed to be defined.
\begin{enumerate}
\item Suppose that $p$ is even (Case 1. of Definition~\ref{def:down}). If $p<m'$, we have that $\down(m',p) = m'$ and thus $\up(\down(m',p),p) = m'$. It follows that $(v',\up(\down(m',p),p) \preceq (v',m')$. If $p\geq m'$ with $p \neq 0$, we have that $\down(m',p) = p-1$ and thus $\up(down(m',p),p) = p$. It follows that if $m'$ is even then $(v',\up(\down(m',p),p) \preceq (v',m')$ by Case 1. of Definition~\ref{def:order}; and if $m'$ is odd then $(v',\up(\down(m',p),p) \preceq (v',m')$ by Case 3. of Definition~\ref{def:order}. If $p\geq m'$ with $p = 0$, then $m'=0$, $\down(m',p) = 0$ and thus $\up(\down(m,p),p) = 0$. It follows that  $(v',\up(\down(m',p),p) \preceq (v',m')$.

\item Suppose that $p$ is odd (Case 2. of Definition~\ref{def:down}). If $p \leq m'$, then we have that $\down(m',p) = m'$ and thus $\up(\down(m',p),p) = m'$. It follows that $(v',\up(\down(m',p),p) \preceq (v',m')$. If $p > m'$, then $\down(m',p) = p+1$ and thus $\up(\down(m',p),p) = p+1$. It follows that if $m'$ is even then $(v',\up(\down(m',p),p) \preceq (v',m')$ by Case 1. of Definition~\ref{def:order}; and if $m'$ is odd then $(v',\up(\down(m',p),p) \preceq (v',m')$ by Case 3. of Definition~\ref{def:order}.
\end{enumerate}
\qed\end{proof}

\begin{proof}[of Proposition~\ref{prop:CPreClosed}]
We first prove that if $U$ is closed, then $\CPre{0}{G \times M}{U}$ is also closed. To this end we use the monotonicity of function~$\up$ (Item~1. of Lemma~\ref{lem:down}). Let $(v,m_1), (v,m_2)$ be such that $(v,m_2) \in \CPre{0}{G \times M}{U}$ and $(v,m_1) \preceq (v,m_2)$. We have to show that $(v,m_1) \in \CPre{0}{G \times M}{U}$. We first suppose that $v \in V_0$. Hence, by Equation~(\ref{eq:CpreU0}), there exists $((v,m_2),(v',m'_2)) \in \edge{M}$  with $(v',m'_2) \in U$. Recall that $m'_2 = \up(m_2,\p(v))$. Let $m'_1 = \up(m_1,\p(v))$. As $(v,m_1) \preceq (v,m_2)$, by monotonicity of $\up$, it follows that $(v',m'_1) \preceq (v',m'_2)$. Since $U$ is closed, we have $(v',m'_1) \in U$, and thus $(v,m_1) \in \CPre{0}{G \times M}{U}$. The proof is similar if $v \in V_1$.

Let us now prove that $\lceil \CPre{0}{G \times M}{U} \rceil =\lceil C_0 \cup C_1 \rceil = \lceil B_0 \cup B_1 \rceil$ or equivalently that $C_0 = \, \downarrow \! B_0$ and $C_1 = \, \downarrow \! B_1$. We only prove the first equality thanks to Lemma~\ref{lem:down}, the latter one being proved similarly. 

Let $(v,m_1) \in C_0$ with $p = \p(v)$ and let us show that $(v,m_1) \in \, \downarrow \! B_0$. By Equation~(\ref{eq:CpreU0}), there exists $((v,m_1),(v',m'_1)) \in \edge{M}$ with $(v',m'_1) \in U$ and $m'_1 = \up(m_1,p)$. As $A = \lceil U \rceil$, there exists $(v',m'_2) \in A$ such that $(v',m'_1) \preceq (v',m'_2)$. Let $(v,m_2) = (v,\down(m'_2,p)) \in \, \downarrow \! B_0$. By Item~3. of Lemma~\ref{lem:down}, we have $(v,\down(m'_1,p)) \preceq (v,\down(m'_2,p))$. Moreover, as $m'_1 = \up(m_1,p)$, by Item~2. of Lemma~\ref{lem:down}, we have $(v,m_1) \preceq (v,\down(m'_1,p))$. Therefore $(v,m_1) \preceq (v,m_2)$ showing that $(v,m_1) \in \, \downarrow \! B_0$.

Let $(v,m_1) \in \, \downarrow \! B_0$ with $p = \p(v)$ and let us show that $(v,m_1) \in C_0$. As $(v,m_1) \in \, \downarrow \! B_0$, there exists $(v,m_2) \in B_0$ such that $(v,m_1) \preceq (v,m_2)$. As $(v,m_2) \in B_0$, there exists $((v,m_2),(v',m'_2)) \in \edge{M}$ with $(v',m'_2) \in A$ and $m_2 = \down(m'_2,p)$. By Item~1. of Lemma~\ref{lem:down}, we have $(v',\up(m_1,p)) \preceq (v',\up(m_2,p))$. Let $(v',m'_1) = (v',\up(m_1,p))$. By Item~4. of Lemma~\ref{lem:down}, we have $(v',\up(m_2,p)) \preceq (v',m'_2)$. Therefore $(v',m'_1) \preceq (v',m'_2)$. As $(v',m'_2) \in A = \lceil U \rceil$, then $(v',m'_1) \in U$ and thus $(v,m_1) \in C_0$.
\qed\end{proof}

Given a generalized parity game and its extended game $G \times M_1 \times \ldots \times M_k$, we now explain what are the needed adaptations for computing $\JF{0}{G}{\p_1,\ldots,\p_k}$. The approach is similar and works \emph{dimension by dimension} as we did before for parity games. First we define a partial order on $V \times M_1 \times \ldots \times M_k$ such that the partial order of Definition~\ref{def:order} is used on each dimension. More precisely, we define the strict partial order $\prec$ on $V \times M_1 \times \ldots \times M_k$ such that $(v',m'_1,\ldots,m'_k) \prec (v,m_1,\ldots,m_k)$ if and only if $v = v'$ and for all $\ell$:
\begin{enumerate}
\item either $m_\ell, m'_\ell$ are even and $m'_\ell > m_\ell$,
\item or $m_\ell, m'_\ell$ are odd  and $m'_\ell < m_\ell$,
\item or $m_\ell$ is odd and $m'_\ell$ is even.
\end{enumerate}
For instance, if $[d_1] = [d_2] = [3]$, then for a fixed $v$ we have the lower semilattice of Figure~\ref{fig:order}.
\begin{figure}[ht]
	\centering
	\begin{tikzpicture}
		\tikzstyle{nosep}=[inner sep=0pt, outer sep=0pt]
		\matrix (a) [matrix of math nodes, column sep=-1mm, row sep=3mm]{
			& & & (3,3) & & & \\
			& & (3,1) &       & (1,3) & & \\
			& (3,0) & & (1,1) & & (0,3) & \\
			(3,2) & & (1,0) & & (0,1) & &  (2,3)\\
			& (1,2) & & (0, 0) & & (2,1) & \\
			& & (0,2) &       & (2,0) & & \\
			& & & (2,2) & & & \\};
		
		\foreach \i/\j in {1-4/2-3, 1-4/2-5,  2-3/3-2, 2-3/3-4,%
			2-5/3-4, 2-5/3-6, 3-2/4-1, 3-2/4-3, 3-4/4-5, 3-4/4-3,%
			3-6/4-5, 3-6/4-7, 4-1/5-2, 4-3/5-2, 4-3/5-4,4-5/5-4,%
			4-5/5-6, 4-7/5-6, 5-2/6-3, 5-4/6-3, 5-4/6-5,5-6/6-5,6-3/7-4,6-5/7-4}
		\draw (a-\i) -- (a-\j);
	\end{tikzpicture}
	\caption{Partial order $\preceq$}
	\label{fig:order}
\end{figure}
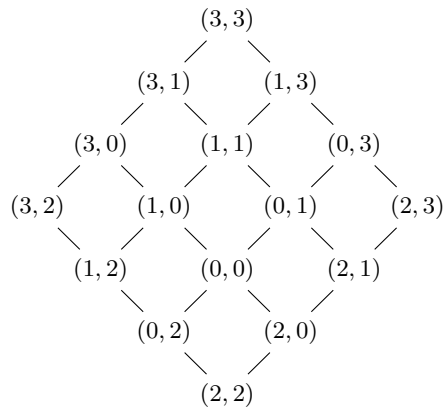

Second, functions $\up_\ell$ and $\down_\ell$, with $\ell \in \{1,\dots, k\}$, are defined exactly as previous functions $\up$ and $\down$ (see Definition~\ref{def:down}), for each dimension $\ell$ and with respect to priority function $\p_\ell$. Third, we adapt (as expected) Proposition~\ref{prop:CPreClosed} for the computation of $\CPre{0}{G \times M_1 \times \ldots \times M_k}{U}$ for a closed set $U \subseteq V \times M_1 \times \ldots \times M_k$. Finally, we obtain an antichain-based algorithm for $\JF{0}{G}{\p_1,\ldots,\p_k}$ as Lemma~\ref{lem:antichain} remains true: each set $T_j$ is a closed set represented by the antichain $\lceil T_j \rceil =  \{(v,0,\ldots,0) \mid v \in F_j\}$.

\section{Algorithms \psolQ\ and \GenpsolQ} \label{sec:psolQ}

In~\cite{HuthKP13}, the authors study another polynomial time partial solver for parity games, called psolQ, that has some similarities with the \psolC\ approach described in Section~\ref{sec:psolC}. It is a more complex partial solver that we present in this section. We then explain how to modify it to get a partial solver for generalized parity games. This partial solver \emph{(i)} works on the initial game structure $G$ and not on the extended game structure $G \times M$, \emph{(ii)} focuses on a \emph{subset $P$ of $i$-priorities} and not on all $i$-priorities, and \emph{(iii)} computes a set similar to $\JF{i}{G}{\p}$ such that the \posAttr\ of Equation~(\ref{eq:mid}) is replaced by a \emph{layered structure of attractors} (one layer per priority $p \in P$).  

\paragraph{\bf Safe attractors.~} 

As we do no longer work with the extended game $G \times M$, we have to adapt the notion \posAttr\ in a way to visit some $i$-priority \emph{while avoiding} visiting any greatest $(1-i)$-priority. Let $G$ be a game structure and $U, U' \subseteq V$ be two subsets of $V$. The \emph{\posSafeAttr} $\PosSafeAttr{i}{G}{U}{U'}$ is the set of vertices from which player $i$ can ensure to visit $U$ in any positive number of steps while visiting no vertex of $U'$. This can be computed with Equations~(\ref{eq:AttrInit}-\ref{eq:Attr}) with the restriction that vertices in $U'$ are not allowed. 

\paragraph{\bf Partial solver for parity games.~}

Let $(G,\Par(\p))$ be a parity game. Let $\pmin$ be some given $i$-priority and $\Pset \subseteq [d]$ be the set composed of all $i$-priorities $p \geq \pmin$. Thus $\Pset  = \{\pmin, \pmin +2, \pmin +4, \ldots, \pmax{i}\}$ such that $\pmax{i}$ is equal to $d$ (resp. $d-1$) if $d$ (resp. $d-1$) is an $i$-priority. Let $U \subseteq \{v \in V \mid \p(v) \in \Pset \}$ be a subset of vertices with an $i$-priority at least $q$. For each $p \in \Pset$, let $U_{p} = U \cap \{v \in V \mid \p(v) \geq p\}$ be the set of vertices of $U$ with priority at least $p$. Hence $U_{\pmax{i}} \subseteq U_{\pmax{i}-2} \subseteq \ldots \subseteq U_{\pmin +2} \subseteq U_{\pmin} = U$. For each $p \in \Pset$, we also consider $U'_{p} = \{v \in V \mid \p(v) \mbox{ is a $(1-i)$-priority and } \p(v) > p\}$, that is, the set of all vertices with $(1-i)$-priority greater than $p$. We compute the following layered structure of \posSafeAttr s:
Initially $B_{\pmax{i} + 2} = \emptyset$ and for all $p \in \Pset$, $B_p$ is computed from $B_{p+2}$ as follows:
\begin{eqnarray}
B_p &=& B_{p+2} \cup \PosSafeAttr{i}{G}{U_{p} \cup B_{p+2}}{U'_{p} \setminus B_{p+2}}.  \label{eq:Bp}
\end{eqnarray}
We call \emph{\layeredAttr} the set $B_{\pmin}$ and we denote it by $\LAttr{i}{G}{\p}{\Pset}{U}$. Notice that $B_{\pmax{i}} \subseteq B_{\pmax{i} - 2} \subseteq \ldots \subseteq U_{\pmin +2} \subseteq B_{\pmin} = \LAttr{i}{G}{\p}{\Pset}{U}$.

For example, consider $d = 9$ and the $0$-priority $q = 4$. Hence $\Pset$ is equal to $\{4,6,\pmax{0} =8\}$. Given some set $U \subseteq \{v \in V \mid \p(v) \in \Pset \}$, $U_8$ (resp. $U_6$, $U_4$) is the set of vertices of $U$ with priority $8$ (resp. priorities in $\{6,8\}$, in $\{4,6,8\}$), and $U'_8$ (resp. $U'_6$, $U'_4$) is the set of vertices with priority $9$ (resp. priorities in $\{7,9\}$, in $\{5,7,9\}$). On Figure~\ref{fig:layer} is depicted the related \layeredAttr\ with three layers.

Let us give some intuition about this layered structure of attractors. From a vertex in $B_{\pmin} \setminus B_{\pmin+2}$ (lowest layer $\pmin$), player~$i$ can ensure to visit $U_{\pmin} \cup B_{\pmin+2}$ without visiting $U'_{\pmin} \setminus B_{\pmin+2}$. In case of a visit to $U_{\pmin}$, this is a good episode for himself (in the sense of Section~\ref{sec:psolC}) since the maximum visited priority along the current history consistent with his strategy is an $i$-priority ($\geq \pmin$). In case of a visit to some $v \in B_{\pmin+2}$, suppose that $v \in B_{p} \setminus B_{p+2}$ (layer $p$ with $p \geq \pmin +2$). Then player~$i$ can now ensure to visit $U_{p} \cup B_{p+2}$ without visiting $U'_{p} \setminus B_{p+2}$. In case of a visit to $U_{p}$, this is again a good episode for player~$i$, otherwise it is a visit to some vertex in a higher layer. If necessary this can be repeated until reaching the highest layer $\pmax{i}$ where player~$i$ can ensure to visit $U_{\pmax{i}}$ without visiting $U'_{\pmax{i}}$ (since $B_{\pmax{i} +2}$ is empty). Thus from all vertices of $B_{\pmin}$, player~$i$ can ensure a good episode for himself. 

\begin{figure}[ht]
	\centering
	\begin{tikzpicture}[scale=0.9]
		\draw (0,7) -- (5,7) -- (5,9.5) -- (0,9.5) -- (0,7);
		\draw (3,7) -- (3,9.5);
		\node[] at (-1,8.25) {Layer 8};
		\node[] at (4,8.25) {$U_8$};
		\node[] at (6,8.25) {$B_8$};
		\draw [->] (1.5,8.25)  -- (3.5,8.25) ;
		
		\draw (0,3.5) -- (5,3.5) -- (5,6) -- (0,6) -- (0,3.5);
		\draw (3,3.5) -- (3,6);
		\draw (3,4.75) -- (5,4.75);
		\node[] at (-1,4.75) {Layer 6};
		\node[] at (4,5.375) {$B_8$};
		\node[] at (4,4.125) {$U_6$};
		\node[] at (6,4.75) {$B_6$};
		\draw [->] (1.5,5.375)  -- (3.5,5.375) ;
		\draw [->] (1.5,4.125)  -- (3.5,4.125) ;
		
		\draw (0,0) -- (5,0) -- (5,2.5) -- (0,2.5) -- (0,0);
		\draw (3,0) -- (3,2.5);
		\draw (3,1.25) -- (5,1.25);
		\node[] at (-1,1.25) {Layer 4};
		\node[] at (4,1.875) {$B_6$};
		\node[] at (4,0.625) {$U_4$};
		\node[] at (6,1.25) {$B_4$};
		\draw [->] (1.5,0.625)  -- (3.5,0.625) ;
		\draw [->] (1.5,1.875)  -- (3.5,1.875) ;
	\end{tikzpicture}	
	\caption{Layered structure of attractors with $U_8 \subseteq U_6 \subseteq U_4 = U$ and $B_8 \subseteq B_6 \subseteq B_4 = \LAttr{0}{G}{\p}{\Pset}{U}$}
	\label{fig:layer}
\end{figure}
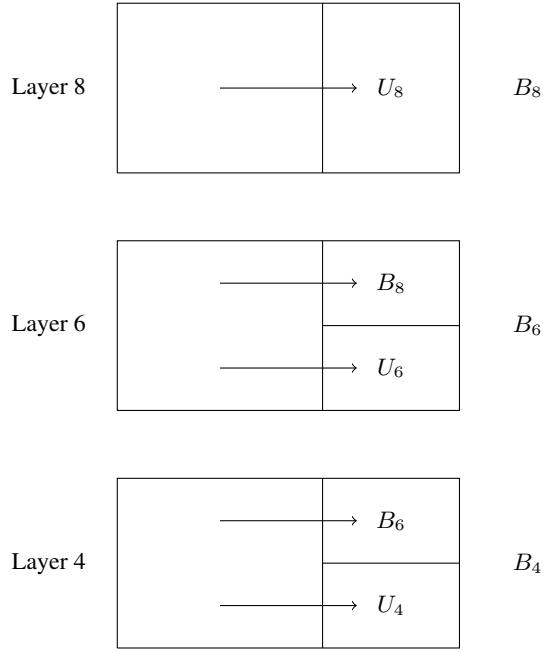

The new partial solver is the same as the \psolC\ solver of Section~\ref{sec:psolC} except that in  Equations~(\ref{eq:begin}-\ref{eq:end}), \emph{(i)} the \layeredAttr\ $\LAttr{i}{G}{\p}{\Pset}{U}$ in the game $G$ replaces the attractor $\PosAttr{i}{G \times M}{T}$ in the extended game $G \times M$ of Equation~(\ref{eq:mid}), and \emph{(ii)} the set $\Pset  = \{\pmin, \pmin +2, \pmin +4, \ldots, \pmax{i}\}$ replaces the set of all $i$-priorities. More precisely, we compute the sequence $(F_j)_{j \geq 0}$ such that $F_0 = \{v \in V \mid \p(v) \in \Pset\}$ and for $j \geq 1$, $F_j$ is computed from $F_{j-1}$ as follows:
\begin{eqnarray}
A_j &=& \LAttr{i}{G}{\p}{\Pset}{F_j} \label{eq:Qmid} \\
F_j &=& A_j \cap F_{j-1}.  \label{eq:Qend}
\end{eqnarray}
We denote by $\QJF{i}{G}{\p}{\Pset}$ the fixpoint $\bigcap_{j\geq 0} F_j$. As in $\JF{i}{G}{\p}$, from any vertex $v_0$ in this fixpoint, player~$i$ can ensure a succession of good episodes implying that he is winning from $v_0$ for $\iPar{i}(\p)$. 

\begin{proposition}[\cite{HuthKP13}] \label{prop:psolQ}
Let $(G,\Par(\p))$ be a parity game and $\Pset \subseteq [d]$ be the set of all $i$-priorities $p \geq q$ for some given $i$-priority $q$. Let $F^{(i)} = \QJF{i}{G}{\p}{\Pset}$. Then for all $v_0 \in V$, if $v_0 \in \Attr{i}{G}{F^{(i)}}$, then $v_0 \in \Win{i}{G}{\iPar{i}(\p)}$. 
\qed\end{proposition}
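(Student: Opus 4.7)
The plan is to construct a winning strategy $\sigma_i$ for player~$i$ from every vertex of $\Attr{i}{G}{F^{(i)}}$, mirroring the good-episode argument behind Proposition~\ref{prop:JFpsolC} with the layered attractor playing the role of the \posAttr\ in $G\times M$. By prefix-independence of $\iPar{i}(\p)$ it suffices to define $\sigma_i$ from vertices $v_0\in F^{(i)}$ and to precede it on $\Attr{i}{G}{F^{(i)}}\setminus F^{(i)}$ by the standard reachability strategy toward $F^{(i)}$. Inside $F^{(i)}$, for each current vertex $v$ I would compute its \emph{layer}, namely the largest $p\in \Pset$ with $v\in B_p$ in the computation of $\LAttr{i}{G}{\p}{\Pset}{F^{(i)}}$, and then follow the strategy witnessing $v\in \PosSafeAttr{i}{G}{U_p\cup B_{p+2}}{U'_p\setminus B_{p+2}}$ until the play hits $U_p\cup B_{p+2}$. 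The fact that $\sigma_i$ is everywhere defined on $F^{(i)}$ uses the fixpoint identity $F^{(i)}\subseteq \LAttr{i}{G}{\p}{\Pset}{F^{(i)}}$.

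Next I would show that every play $\pi$ consistent with $\sigma_i$ starting in $F^{(i)}$ decomposes into an infinite concatenation of \emph{super-segments} between consecutive visits to $F^{(i)}$, each being a finite sequence of phases at \emph{strictly increasing} layers $p_1<p_2<\cdots<p_k$. A phase at layer $p$ ends either in $U_p\subseteq F^{(i)}$, closing the super-segment, or in $B_{p+2}\setminus U_p$, in which case the next phase runs at a layer $p'\geq p+2$ (the layer of the transit vertex in the residual chain $B_{p+2}\supseteq B_{p+4}\supseteq\cdots$). Since $\Pset$ is finite and $B_{\pmax{i}+2}=\emptyset$, any chain of transits must eventually end in some $U_{p_k}$, so every super-segment is finite and $F^{(i)}$ is visited infinitely often along $\pi$. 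The crux is then a priority bound: every vertex $v\in B_p\setminus B_{p+2}$ satisfies $\p(v)\leq p$ or $\p(v)$ is an $i$-priority, because such a $v$ belongs to the positive safe attractor defining $B_p$, which excludes $U'_p\setminus B_{p+2}$. Inside a phase at layer $p_j$ the intermediate vertices lie in $B_{p_j}\setminus B_{p_j+2}$ and the transit vertex lies in $B_{p_{j+1}}\setminus B_{p_{j+1}+2}$, so the bound applies with $p=p_j$ and $p=p_{j+1}\leq p_k$ respectively, while the closing vertex sits in $U_{p_k}\subseteq F^{(i)}$ with an $i$-priority $\geq p_k$. Hence every priority visited during the super-segment is either an $i$-priority or bounded by $p_k$, while the maximum is $\geq p_k$; this dichotomy forces the super-segment's maximum priority to be an $i$-priority in $[q,\pmax{i}]$. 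Since that set is finite, the maximum priority visited infinitely often along $\pi$ is itself an $i$-priority, so $\pi\in \iPar{i}(\p)$.

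The main obstacle I anticipate is precisely the transit step across layers: the landing vertex of a phase can carry a $(1-i)$-priority strictly greater than the layer just left, which at first sight threatens the claim that the super-segment's maximum is an $i$-priority. The layerwise bound above is what neutralises this threat, because any such landing vertex is in turn dominated by its new (strictly larger) layer $p_{j+1}\leq p_k$, while the closing vertex of the super-segment contributes an $i$-priority $\geq p_k$. Once this bound is established, the remaining steps are routine: the positive safe attractor strategies are correct by construction, the phase-and-super-segment decomposition terminates thanks to the strict increase of layers, and the final conclusion follows from prefix-independence of $\iPar{i}(\p)$ together with the attractor prefix handling the case $v_0\in \Attr{i}{G}{F^{(i)}}\setminus F^{(i)}$.
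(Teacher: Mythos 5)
Your proof is correct and takes essentially the same approach as the paper, which states this proposition without a formal proof (it is imported from~\cite{HuthKP13}) but sketches exactly this argument in the surrounding text: the layered attractor yields a succession of good episodes, each decomposing into phases at strictly increasing layers whose maximum visited priority is an $i$-priority. Your layerwise priority bound on $B_p \setminus B_{p+2}$ (every such vertex has an $i$-priority or a priority at most $p$) is the right observation to make that sketch rigorous.
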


We derive from this proposition a polynomial time algorithm similar to Algorithm~\ref{algo:psolC} where $\QJF{i}{G}{\p}{\Pset}$ is used instead of $\JF{i}{G}{\p}$ and $\cal P$ is an input list composed of all possible sets $\Pset$, with $q \in [d]$. This algorithm  considers different sets $\Pset$ until finding one composed of $i$-priorities and such that $\QJF{i}{G}{\p}{\Pset}$ is non empty and then makes a recursive call.


\begin{algorithm}
	\caption{\psolQ($G,\p$)}
	\begin{algorithmic}[1]
		\ForEach{$\Pset \in {\cal P}$}
		    \State $i =$ parity of the priorities in $\Pset$
           \State $W = \QJF{i}{G}{\p}{\Pset}$
			\If {$W \neq \emptyset$} 
			    \State $X = \Attr{i}{G}{W}$
			    \State $\{Z_i, Z_{1-i}\} =$ \psolQ($G \setminus X,\p$) 
			    \State return $\{Z_i \cup X, Z_{1-i}\}$
			\EndIf
		\EndFor
		\State return $\{\emptyset,\emptyset\}$ 
	\end{algorithmic}
	\label{algo:psolQ}
\end{algorithm}


\paragraph{\bf Partial solver for generalized parity games.~}

We want to adapt the \psolQ\ approach to generalized parity games. As for the \GenpsolB\ and \GenpsolC\ approaches, we have to treat separately player~$0$ and player~$1$.

For player~$1$, recall that his objective is a disjunction of $\OddPar(\p_\ell)$ over $\ell \in \{1, \ldots, k\}$. Therefore we can apply the \psolQ\ approach to each $\p_\ell$ as explained before for parity games, thus computing $\QJF{1}{G}{\p_\ell}{\Pset}$ for $\Pset$ depending on some $1$-priority $q \in [d_\ell]$. 

For player~$0$, recall that his objective is a conjunction of $\Par(\p_\ell)$ over $\ell \in \{1, \ldots, k\}$. We are going to extend the \psolQ\ approach in this context, for player~$0$. We fix a vector of $0$-priorities $\pmin = (\pmini{1}, \ldots, \pmini{k})$. Let $\pmax{0} = (\pmaxi{1}, \ldots, \pmaxi{k})$ be the vector of maximum $0$-priorities $\pmaxi{\ell} \in [d_\ell]$ for all $\ell$. We denote by $\Pset$ the set $\{\pmax{0}, \pmax{0} - 2, \pmax{0} - 4, \ldots, \pmin\}$ such that if $p = (p_1,\ldots,p_k)$, then $p-2 = (\max\{p_1-2,\pmin_1\},\ldots,\max\{p_k-2,\pmin_k\})$. Let us illustrate with an example. Consider a generalized parity game $d_\ell = 9$ for all $\ell$ and $q =  (4,\ldots,4)$. Then we get $\Pset = \{(8,\ldots,8),(6,\ldots, 6),(4,\ldots,4)\}$. 

Take a vector $p \in \Pset$ and a subset $U \subseteq V$ of vertices.\footnote{In this section, $p,\pmin,\pmax{0}$ are all vectors of priorities (and not a priority).} In a \emph{first step}, let us focus on how player~$0$ can ensure to visit $U$ such that along the history, for all $\ell$, a $0$-priority $\geq p_\ell$ is visited and no $1$-priority $> p_\ell$ is visited (in a way to extend Equation~(\ref{eq:Bp}) temporarily without set $B_{p+2}$). Such a generalized reachability can be reduced to reachability by working with the game $G$ extended with memory ${\cal N}_p$ such that $N \in {\cal N}_p$ records the dimensions $\ell$ for which a vertex with $0$-priority $\geq p_\ell$ is already visited. More precisely, we consider the extended game structure $G_p = G \times {\cal N}_p$ with ${\cal N}_p = \{N \mid N \subseteq \{1,\ldots, k\}\}$. For each $v \in V$, we denote by $N_p(v) = \{\ell \mid \p_\ell(v) \mbox{ is a $0$-priority such that } \p_\ell(v) \geq p_\ell \}$. The game structure $G_p$ has the set $V \times {\cal N}_p$ of vertices and the set $E_p$ of edges $((v,N),(v',N'))$ such that $(v,v') \in E$ and $N' = N \cup N_p(v')$.\footnote{Notice that even if ${\cal N}_p$ does not depend on $p$, the edges of $G_p$ depend on $p$.} Moreover, $V_i \times {\cal N}_p$ is the set of vertices controlled by player~$i$, for $i \in \{0,1\}$. Finally to any initial vertex $v_0$ in $G$ corresponds the initial vertex $(v_0,N_p(v_0))$ in the extended game $G_p$.

Define $T_p = \{ (v,N) \mid v \in U, N = \{1,\ldots,k\} \}$ and $T'_p = \{ (v,N) \in  V \times {\cal N}_p \mid \exists \ell, \, \p_\ell(v)$ is a $1$-priority $> p_\ell \}$\footnote{Notice that $T_p$ depends on $U$ while $T_p$ does not.}. Clearly, in game $G$, player~$0$ can ensure to visit $U$ such that for all $\ell$, a $0$-priority $\geq p_\ell$ is visited and no $1$-priority $> p_\ell$ is visited from $v_0$ if and only if in the extended game $G_p$, he can ensure to visit $T_p$ while not visiting $T'_p$ from $(v_0,N_p(v_0))$. The latter condition is equivalent to  $(v_0,N_p(v_0) \in \PosSafeAttr{0}{G_p}{T_p}{T'_p}$. In this case player~$0$ can derive from $G_p$ to $G$, a winning strategy from $v_0$ (that has a finite memory equal to ${\cal N}_p$).

In a \emph{second step}, let us explain how to generalize the concept of \layeredAttr, and in particular how to manage the set $B_{p+2}$ in Equation~(\ref{eq:Bp}). For parity games we explained how player~$0$ has to adapt his attractor strategy when he shifts from layer~$p$ to some higher layer $p' > p$ due to the visit to some $v \in B_{p+2}$. Here when player~$0$ visits some vertex $(v,N)$ in layer $p$ for which he has to shift to layer $p'$, he stops applying his winning strategy for layer~$p$, and begins applying his winning strategy for layer~$p'$ from the initial vertex $(v,N_{p'}(v))$ belonging to layer~$p'$. Hence Player~$0$ thus \emph{resets} his memory from $(v,N)$ to $(v,N_{p'}(v))$.

We are now ready to adapt Equation~(\ref{eq:Bp}) to compute a \layeredAttr\ in the context of generalized parity games. Recall that $\Pset = \{\pmax{0}, \pmax{0} - 2, \ldots, \pmin\}$ for some given vector $\pmin = (\pmini{1}, \ldots, \pmini{k})$ of $0$-priorities and that $U$ is a subset of $V$. We compute the following layered structure of \posSafeAttr s:
initially $C_{\pmax{0} + 2} = \emptyset$ and for all $p \in \Pset$, $C_p$ is computed from $C_{p+2}$ as follows:
\begin{eqnarray*}
B_p &=& \PosSafeAttr{0}{G_p}{T_{p} \cup C_{p+2}}{T'_{p} \setminus C_{p+2}}, \\
C_p &=& C_{p+2} \cup \{ (v,N) \mid N \subseteq \{1,\ldots,k\}, (v,N_p(v)) \in B_p\} 
\end{eqnarray*}
The \layeredAttr\ is here the set $\{v \in V \mid (v,N_p(v)) \in B_{\pmin}\}$; we denote it by $\LAttr{0}{G}{\p_1,\ldots,\p_k}{\Pset}{U}$.

In a \emph{last step}, it remains to explain how to use this adapted notion of \layeredAttr\ to obtain a partial solver for generalized parity games. We simply take the same equations (\ref{eq:Qmid}-\ref{eq:Qend}) except that $\LAttr{0}{G}{\p}{\Pset}{F_j}$ must be replaced by $\LAttr{0}{G}{\p_1,\ldots,\p_k}{\Pset}{F_j}$. We denote by $\QJF{0}{G}{\p_1,\ldots,\p_k}{\Pset}$ the fixpoint $\bigcap_{j\geq 0} F_j$. From all the given arguments in the previous three steps follows the next proposition.

\begin{proposition}[\cite{HuthKP13}] 
Let $(G,\ConjPar(\p_1,\ldots,\p_k))$ be a generalized parity game. For some given vector $\pmin = (\pmini{1}, \ldots, \pmini{k})$ of $0$-priorities, let $\Pset = \{\pmax{0}, \pmax{0} - 2, \ldots, \pmin\}$ and let $F^{(0)} = \QJF{0}{G}{\p_1,\ldots,\p_k}{\Pset}$. Then for all $v_0 \in V$, if $v_0 \in \Attr{0}{G}{F^{(0)}}$, then $v_0 \in \Win{0}{G}{\ConjPar(\p_1,\ldots,\p_k)}$. 
\qed\end{proposition}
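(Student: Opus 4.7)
The plan is to reduce to showing that player~$0$ wins from every vertex of $F^{(0)}$ itself. Since $\ConjPar(\p_1,\ldots,\p_k)$ is prefix-independent and, by Theorem~\ref{thm:attr}, player~$0$ has a memoryless strategy to force the play from $\Attr{0}{G}{F^{(0)}}$ into $F^{(0)}$ in finitely many steps, combining this attractor strategy with a winning strategy from $F^{(0)}$ yields a winning strategy on the whole attractor. So the real work is at vertices of $F^{(0)}$. The idea (mirroring the proof of Proposition~\ref{prop:psolQ} but extended to vectors of priorities) is to construct a finite-memory strategy that produces an infinite sequence of \emph{good episodes}: finite segments between successive visits to $F^{(0)}$ along which, for every dimension $\ell$, some $0$-priority $\geq \pmini{\ell}$ is visited while no $1$-priority $> \pmini{\ell}$ is visited.

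First I would unfold the fixpoint definition to get the self-reference $F^{(0)} \subseteq \LAttr{0}{G}{\p_1,\ldots,\p_k}{\Pset}{F^{(0)}}$, taking $U = F^{(0)}$ in the layered-attractor construction. From a current vertex $v \in F^{(0)}$, let $p \in \Pset$ be the (unique) layer with $(v,N_p(v)) \in B_p \setminus C_{p+2}$. Player~$0$ then plays, in the extended game $G_p$, the \posSafeAttr\ strategy for $B_p$ starting from $(v,N_p(v))$. Two things can happen during the episode in layer $p$: either the play reaches $T_p$, meaning it visits a vertex of $U = F^{(0)}$ with the memory component equal to $\{1,\ldots,k\}$ (so in every dimension $\ell$ a $0$-priority $\geq p_\ell$ has been seen, and by choice of the \posSafeAttr\ no $1$-priority $>p_\ell$ has been seen), or the play enters some $(v',N') \in C_{p+2}$. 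In the latter case, one finds the unique higher layer $p' > p$ with $(v',N_{p'}(v')) \in B_{p'} \setminus C_{p'+2}$, \emph{resets} the memory from $N'$ to $N_{p'}(v')$, and continues in layer $p'$. Since each layer shift strictly increases $p$ within the finite set $\Pset$, after at most $|\Pset|$ shifts one reaches the top layer $\pmax{0}$, where $C_{\pmax{0}+2} = \emptyset$ forces the \posSafeAttr\ to hit $T_{\pmax{0}}$, closing an episode. Thus the strategy is well-defined, uses the finite memory $\mathcal{N}_{p}$ with a scheduler over $\Pset$, and guarantees that after each episode the play is back in $F^{(0)}$.

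Finally, I would verify that the resulting infinite play lies in $\ConjPar(\p_1,\ldots,\p_k)$. Fix a dimension $\ell$. The play decomposes into infinitely many good episodes, each of which ends by visiting a vertex of $F^{(0)}$ whose memory component for $\ell$ is already marked, i.e., some vertex of priority $\geq \pmini{\ell}$ (and even) was visited during that episode in dimension $\ell$. Therefore some even priority $\geq \pmini{\ell}$ occurs infinitely often in dimension $\ell$. Moreover, within each episode the \posSafeAttr\ computations avoid $T'_p$ for the currently active layer $p$, and since layer shifts only increase $p$, within a single episode no $1$-priority $> \pmini{\ell}$ is ever visited; hence no $1$-priority $> \pmini{\ell}$ can occur infinitely often. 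The maximum priority seen infinitely often in dimension $\ell$ is then either an even priority $\geq \pmini{\ell}$ or some priority $\leq \pmini{\ell}$; either way it is dominated by the even witness and thus even. This holds for every $\ell$, so $\ConjPar(\p_1,\ldots,\p_k)$ is satisfied.

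The main obstacle is managing the memory resets at layer shifts: when the play jumps from layer $p$ to a higher layer $p'$, the memory component for some dimensions may be erased (because $N_{p'}(v') \subseteq N'$ can fail), and one must argue that the episode still eventually terminates in a vertex certifying a $0$-priority visit \emph{in every dimension}. This is why the definition of the top layer $\pmax{0}$ with $C_{\pmax{0}+2} = \emptyset$ is critical: once shifted to the top, the \posSafeAttr\ strategy has no escape route but to hit $T_{\pmax{0}}$, which marks all $k$ dimensions at once. Formalising this inductive guarantee on layer shifts, together with the safety claim that no ``too-large'' $1$-priority is ever visited across the episode despite the shifts, is the delicate part of the argument.
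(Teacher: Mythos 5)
Your overall construction --- reduce to $F^{(0)}$ via prefix-independence of the objective, then stitch together good episodes using the layered safe attractors with memory resets --- is exactly the argument the paper sketches (the paper itself only gives the informal three-step description and no detailed proof). Your episode-termination analysis and the claim that every episode certifies, in every dimension $\ell$, a visit to a $0$-priority $\geq \pmini{\ell}$ are correct.

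However, your final verification of the parity condition contains a genuine error. You assert that within each episode the \posSafeAttr\ computations ``avoid $T'_p$'' and conclude that ``within a single episode no $1$-priority $> \pmini{\ell}$ is ever visited.'' This is false: the safe attractor of layer $p$ only avoids $T'_{p} \setminus C_{p+2}$, so vertices carrying a $1$-priority $> p_\ell$ \emph{can} be visited provided they lie in a higher layer, and once the play has shifted to a layer $p' > p$, any $1$-priority up to $p'_\ell$ (hence possibly $> \pmini{\ell}$) may be visited freely, and even infinitely often across episodes. Already for $k=1$ with $\Pset=\{0,2\}$ and $d=3$: an episode may shift to layer $2$, see priority $1$ there (only priority $3$ is forbidden in that layer), and close by seeing priority $2$; repeating this forever makes the $1$-priority $1 > \pmini{\ell} = 0$ occur infinitely often. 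So the premise ``no $1$-priority $> \pmini{\ell}$ occurs infinitely often'' on which your case analysis rests is wrong. The correct argument is per-episode and relative to the \emph{final} layer $p''$ of that episode: since layers only increase, and every vertex visited while layer $p$ is active (including each shift vertex, which belongs to its own layer $p'$ and hence avoids $T'_{p'}$) carries no $1$-priority $> p_\ell \le p''_\ell$ in any dimension, all $1$-priorities seen in dimension $\ell$ during the episode are $< p''_\ell$, while the episode closes in $T_{p''}$ and thus witnesses a $0$-priority $\geq p''_\ell$; hence the \emph{maximum} priority of each episode in each dimension is even. One then concludes by the standard limit argument: the maximum priority seen infinitely often in dimension $\ell$ equals the maximum priority, in dimension $\ell$, of some sufficiently late episode in which no larger priority occurs, and is therefore even.
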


We derive from this proposition an algorithm (see Algorithm~\ref{algo:GenpsolQ}) similar to Algorithm~\ref{algo:GenpsolC} such that for each dimension $\ell$, $\JF{1}{G}{\p_\ell}$ is replaced by $\QJF{1}{G}{\p_\ell}{\Pset}$ with $\Pset$ depending on some $1$-priority $q \in [d_\ell]$ (player~$1$) and $\JF{0}{G}{\p_1,\ldots,\p_k}$ is replaced by $\QJF{0}{G}{\p_1,\ldots,\p_k}{\Pset}$ with $\Pset$ depending on some vector $q$ of $0$-priorities (player~$0$). However, as for Algorithm~\ref{algo:psolQ}, this algorithm needs an input list $\cal P$ composed of: 
\begin{itemize}
    \item elements $(\Pset,\ell)$ with $\Pset \subseteq [d_\ell]$ such that $\pmin$ is a $1$-priority in $\{\pmax{1}, \pmax{1} - 2, \ldots, 1\}$,
    \item elements $\Pset \subseteq [d_1] \times \ldots \times [d_k]$ such that $\pmin$ is a vector of $0$-priorities in $\{\pmax{0}, \pmax{0} - 2, \ldots, 0\}$. 
\end{itemize}
By Theorem~\ref{thm:complexity}, this algorithm is in $O((\max_{\ell=1}^{k}\frac{d_\ell}{2})^2 \cdot |V|^2 \cdot |E| \cdot 2^k)$ time complexity.

\begin{algorithm}
	\caption{\GenpsolQ($G,\p_1,\ldots,\p_k, \GenList$)}
	\begin{algorithmic}[1]
		\ForEach{$element \in \GenList$}
		\If {$element = (\Pset,\ell)$}
			\State $W = \QJF{1}{G}{\p_\ell}{\Pset}$  
			\If {$W \neq \emptyset$} 
			    \State $X = \Attr{1}{G}{W}$
			    \State $\{Z_0,Z_1\} =$ \GenpsolQ($G \setminus X, \p_1,\ldots,\p_k, \GenList$)
		        \State return $\{Z_0,Z_1 \cup X\}$	
		    \EndIf
		\Else {\quad \quad We know that $element = \Pset$ with $\Pset \subseteq [d_1] \times \ldots \times [d_k]$}
		   \State $W = \QJF{0}{G}{\p_1,\ldots,\p_k}{\Pset}$ 
		    \If {$W \neq \emptyset$} 
		        \State $X = \Attr{0}{G}{W}$
		        \State $\{Z_0,Z_1\} =$ \GenpsolQ($G \setminus X,\p_1,\ldots,\p_k, \GenList$)
		        \State return $\{Z_0 \cup X,Z_1\}$
		    \EndIf
		\EndIf
		\EndFor
		\State return $\{\emptyset,\emptyset\}$ 
	\end{algorithmic}
	\label{algo:GenpsolQ}
\end{algorithm}





\section{Empirical Evaluation}
\label{sec:experiments}

For parity games, the polynomial time partial solvers \psolB\footnote{The variant with safety objectives.}, \psolC, and \psolQ\ are theoretically compared in~\cite{HuthKP13,HuthKP16}. It is proved that the partial solutions computed by Algorithm \psolB\ are included in the partial solutions computed by Algorithm \psolQ\ themselves included in the partial solutions computed by Algorithm \psolC. Examples of parity games are also given that distinguish the three partial solvers (strict inclusion of partial solutions), as well as an example that is not completely solved by the most powerful partial solver \psolC. This behavior also holds for the three partial solvers proposed here for generalized parity games. Furthermore, their time complexity is exponential in the number $k$ of priority functions whereas the classical algorithm for generalized parity games is exponential in both $k$ and all $d_\ell$~\cite{ChatterjeeHP07}.  

For both parity games and generalized parity games, we implemented the three partial solvers (with the antichain approach for Algorithm~\GenpsolC), Algorithm \Ziel\ (resp. \GenZiel) and its combination \PartialZiel\ (resp. \GenPartialZiel) with each partial solver, and we executed all these algorithms on a large set of benchmarks. 

\paragraph{\bf Setting.~}
Our benchmarks were generated from TLSF specifications used for the Reactive Synthesis Competition (SYNTCOMP~\cite{syntcomp18}) using a compositional translation as explained in the introduction.\footnote{The tool we implemented to realize this translation can be fetched from \url{https://github.com/gaperez64/tlsf2gpg}} The source
code for our prototype tool along with all the information about our benchmarks
is publicly available at \url{https://github.com/Skar0/generalizedparity}. Our experiments have been carried out on a server with Mac OS X 10.13.4 (build 17E199). As hardware, the server had as CPU one 6-Core Intel Xeon; as processor speed, 3.33 GHz; as L2 Cache (per Core), 256 KB; as L3 Cache, 12 MB; as memory, 32 GB; and as processor interconnect speed, 6.4 GT/s. We implemented our algorithms in Python 2.7.

\paragraph{\bf Experiments on parity games.~}

We considered 240 benchmarks for parity games. Those games have a mean size $|V|$ around 46K with a maximal size of 3157K, and a mean number $d$ of priorities of $4.1$ with a maximal number $d= 15$. The statistics about the behaviors of the different algorithms are summarized in Table~\ref{tab:onedim1} (for all the 240 benchmarks) and Table~\ref{tab:onedim2} (for the 20 most difficult benchmarks for Algorithm~\Ziel). Column~$1$ indicates the name of the solver, Column~$2$ counts the number of benchmarks completely solved (for the partial solvers, the second number is the number of incomplete solutions), Column~$3$ counts the number of timeouts (fixed at 60000 ms), and Column~$4$ counts how many times the solver was the fastest (excluding examples with timeout). In Table~\ref{tab:onedim1} (resp.~\ref{tab:onedim2}), for the 233 (resp. 13) benchmarks without timeout for all the complete Algorithms \Ziel\ and \PartialZiel, Column~$5$ indicates the mean execution time in milliseconds.

\begin{table}
\begin{center}
\begin{tabular}{|c| c| c| c|c|}
    \hline
     Solver & Solved & T.O. & Fastest & Mean time (233)   \\
     \hline
     \Ziel & 240 (100\%) & 0 & 150 (63 \%) & 272 ms         \\  
     \hline
     Ziel\&\psolB & 240 (100\%) & 0 & 89 (37 \%) & 480 ms   \\
     \hline
     Ziel\&\psolC & 233 (97\%) & 7 & 0 (0\%) & 1272 ms      \\
     \hline
     Ziel\&\psolQ & 238 (99\%) & 2 & 1 (0\%) & 587 ms     \\
     \hline
     \psolB & 203 (84\%) - 37 & 0  & - & -                        \\
          \hline
      \psolC & 233 (97\%) - 0 & 7 & - & -                        \\
     \hline
      \psolQ & 232 (97\%) - 6 & 2 & - & -                         \\
     \hline
\end{tabular}%
\end{center}
\caption{\label{tab:onedim1} Statistics on the one dimensional benchmarks.}
\end{table}

\begin{table}
\begin{center}
\begin{tabular}{|c| c| c| c|c|}
    \hline
     Solver & Solved & T.O. & Fastest & Mean time (13) \\
     \hline
     \Ziel & 20 (100\%) & 0 & 11 (55\%) & 451 ms  \\  
     \hline
     Ziel\&\psolB & 20 (100\%) & 0 & 8 (40\%) & 7746 ms  \\
     \hline
     Ziel\&\psolC & 13 (65\%) & 7 & 0 (0\%) & 20025 ms   \\
     \hline
     Ziel\&\psolQ & 18 (99\%)  & 2 & 1 (5\%) & 9079 ms  \\
     \hline
     \psolB & 15 (75\%) - 5 & 0  & - & -\\
     \hline
      \psolC & 13 (65\%) - 0 & 7 & - & -\\
     \hline
    \psolQ & 18 (90\%) - 0 & 2 & - & -\\
      \hline
\end{tabular}
\end{center}
\caption{\label{tab:onedim2} Statistics on the one dimensional benchmarks for the 20 most difficult instances for Algorithm \Ziel}
\end{table}

\paragraph{\bf Experiments for generalized parity games.~}

We considered 152 benchmarks for generalized parity games. Those games have a mean size $|V|$ around 207K with a maximal size of 7009K. The mean number of priority functions is equal to $4.53$ with a maximum number of $17$. The statistics about the behaviors of the different algorithms are summarized in Table~\ref{tab:severaldim}. The columns have the same meaning as before and the last column concerns the 87 benchmarks without timeout for all Algorithms \GenZiel\ and \GenPartialZiel.

\begin{table}[]
\begin{center}
\begin{tabular}{|c|c|c|c|c|}
\hline
Solver & Solved & T.O.  &     Fastest               &       Mean-Time  (87)          \\ \hline
 \GenZiel          & 128 (84\%) & 24 &     33 (25\%)              &         66 ms          \\ \hline
 GenZiel\&\GenpsolB & 130 (86\%) & 22 &     72  (55\%)             &         56 ms         \\ \hline
 GenZiel\&\GenpsolC & 112 (74\%) & 40 &     24   (18\%)            &         644 ms             \\ \hline
  GenZiel\&\GenpsolQ & 110 (72\%) & 42 &     3    (2\%)           &          1133 ms         \\ \hline
 \GenpsolB &         110 (72\%) - 20 & 22 &         -             &       -         \\ \hline
 \GenpsolC &         112 (74\%) - 0 & 40 &        -               &      -            \\ \hline
  \GenpsolQ &         104 (68\%) - 6 & 42 &         -              &       -         \\ \hline
\end{tabular}
\end{center}

\caption{Statistics on the multi-dimensional benchmarks.}
\label{tab:severaldim}
\end{table}

\paragraph{\bf Observations.~}
Our experiments show that for parity games, Algorithm~\Ziel\ is faster than partial solvers on average which was not observed on random graphs in~\cite{HuthKP16}. For generalized parity games, they show that $4$ benchmarks that cannot be solved by Algorithm \GenZiel\ or by a partial solver alone, can be solved by the combination of \GenZiel\ with a partial solver. Our experiments also show that their combination with a partial solver improves the performances of Algorithms \Ziel\ and \GenZiel\ on a large portion of the benchmarks: 90 cases over 240 (38\%) for parity games and 99 cases over 132 (75\%) for generalized parity cases. They suggest that it is interesting to launch in parallel Algorithms GenZiel\&\GenpsolB, GenZiel\&\GenpsolC, and GenZiel\&\GenpsolQ, as none appears to dominate the other ones. Nevertheless, the combination of \GenZiel\ with \GenpsolB\ is a good compromise.

\section{Conclusion}

In this paper, we have shown how to extend the three partial solvers for parity games proposed in~\cite{HuthKP13,HuthKP16} to the case of generalized parity games (conjunction of parity conditions).
For one of those partial solver, we also have provided an antichain-based variant to retain efficiency. In addition, we have shown how to combine those partial solvers with the classical recursive algorithms due to Zielonka~\cite{zielonka98} for parity games and its extension~\cite{ChatterjeeHP07} for generalized parity games. For both parity games and generalized parity games, we have implemented the classical recursive algorithm, the three partial solvers and their combinations with the classical algorithm. All these algorithms have been tested on a large set of instances of parity games and generalized parity games, that were generated from meaningful LTL specifications proposed in the last LTL reactive synthesis competition~\cite{syntcomp18}. For parity games, our  experiments show that Zielonka’s algorithm is faster than partial solvers on average. This was not observed on random graphs in~\cite{HuthKP16} where Algorithm \psolB\ was faster. Equally interestingly, it appears that for generalized parity games, the combination of Algorithm \GenZiel\ with a partial solver allows to solve instances that cannot be solved by the classical recursive algorithm or a partial solver alone.  Finally for both the parity games and the generalized parity games, our experiments indicate that the performances of the classical recursive algorithms are often improved when combined with partial solvers.

\bibliographystyle{abbrv}
\bibliography{biblio}

\end{document}